\documentclass[journal]{IEEEtran}
\usepackage[utf8]{inputenc}

\usepackage{array}
\usepackage{tabularx}
\usepackage{subcaption}
\usepackage{amssymb,amsfonts,amsmath}
\usepackage{amsthm}
\usepackage{mathtools}
\usepackage{bm}
\usepackage{hyperref}
\usepackage{enumitem}
\usepackage{graphicx}
\usepackage{booktabs}
\usepackage{multirow}
\usepackage{float}
\usepackage{algorithm}
\usepackage{algpseudocode}
\usepackage{wasysym}
\usepackage{xcolor}
\usepackage{pifont}
\usepackage{cleveref}
\usepackage{placeins}
\usepackage[automake]{glossaries}
\makeglossaries
\usepackage{etoolbox}
\usepackage{etoolbox}
\newtheorem{proposition}{Proposition}[section]

\usepackage{makecell}

\newtoggle{printappendix}
\toggletrue{printappendix}
\usepackage[automake]{glossaries}
\makeglossaries
\glsdisablehyper

\newglossaryentry{postcuttranscript}{
  name={post-cut transcript},
  description={the full collection of observable metadata artefacts
  generated when a quantum circuit is decomposed into fragments and
  submitted to a cloud provider for execution}
}

\newglossaryentry{wsi}{
  name={workload-structure inference},
  description={the act of recovering algorithmic or structural
  properties of a hidden quantum workload from observable
  submission-time metadata alone, without access to quantum
  states or measurement outcomes}
}

\newglossaryentry{routingtax}{
  name={topological transpilation penalty},
  description={the topology-dependent compilation overhead incurred
  when mapping a logical circuit onto constrained physical hardware,
  acting as the primary physical mechanism through which structural
  information about the hidden workload is encoded into
  provider-visible metadata}
}

\newglossaryentry{transcriptthreatmodel}{
  name={transcript threat model},
  description={the formalised confidentiality surface defined by
  the cloud-visible post-cut execution transcript, within which
  a semi-honest adversary performs passive offline inference}
}

\newglossaryentry{postcutinference}{
  name={post-cut metadata inference},
  description={the attack surface defined by the observable
  fragment-level execution transcript, through which a semi-honest
  cloud provider can reconstruct computational intent from compiled
  circuit metadata without accessing quantum states or measurement
  outcomes}
}

\newglossaryentry{submissiontranscript}{
  name={submission-time transcript},
  description={the subset of the observable transcript available
  at the submission interface prior to execution, comprising
  compiled fragment width, depth, two-qubit gate count, shot
  allocation, and submission timestamp; formally denoted
  $\mathcal{L}$}
}

\newglossaryentry{executionartefact}{
  name={execution-time artefact},
  description={the subset of the observable transcript available
  post-dispatch during cloud ingest and orchestration, comprising
  execution order and true QPU execution duration; formally
  denoted $\Pi$}
}

\newglossaryentry{matchedfootprint}{
  name={matched-footprint control},
  description={an evaluation method that restricts comparison to
  similarly sized circuit sub-populations using a caliper in
  normalised width, depth, and fragment count space, removing
  coarse scale cues to test whether leakage survives footprint
  equalisation}
}

\newglossaryentry{instancedisjoint}{
  name={instance-disjoint split},
  description={an evaluation protocol in which no cutting job
  appears in both training and test sets, preventing
  memorisation of specific circuit instances}
}

\newglossaryentry{sizeholdout}{
  name={size-holdout split},
  description={an evaluation protocol in which the largest-width
  quartile of the dataset is held out as the test set, measuring
  generalisation to unseen workload scales}
}

\newglossaryentry{structuralfingerprint}{
  name={structural fingerprint},
  description={the statistical distribution of compiled sub-circuit
  metadata that encodes the topological and algorithmic properties
  of the original monolithic workload}
}

\newglossaryentry{semihonest}{
  name={semi-honest adversary},
  description={a cloud provider that executes fragment jobs
  correctly but passively logs and analyses classical metadata
  available at the submission interface, deriving advantage
  solely from offline statistical inference; also referred to
  as honest-but-curious}
}

\newglossaryentry{fragmentmultiplicity}{
  name={fragment multiplicity},
  description={the number of independent sub-circuit jobs
  arising from a single cut workload, forming part of the
  observable execution transcript}
}

\newglossaryentry{depthinflation}{
  name={depth inflation},
  description={the ratio of compiled circuit depth to logical
  circuit depth, quantifying the routing overhead imposed by
  mapping a logical circuit onto a constrained hardware topology;
  the dominant leakage signal carrier across all inference tasks}
}

\newglossaryentry{compiledwidth}{
  name={compiled width},
  description={the logical qubit count of a circuit fragment after
  partitioning, acting as a topology-invariant structural signal
  determined by partitioning strategy and algorithm structure
  rather than hardware routing}
}

\newglossaryentry{postcutleakage}{
  name={post-cut metadata leakage},
  description={the confidentiality risk arising from fragmented
  quantum execution, wherein the resulting observable metadata
  forms a structural footprint that a cloud provider can exploit
  to infer properties of the hidden workload}
}

\newglossaryentry{circuitcutting}{
  name={quantum circuit cutting},
  description={the process of decomposing a monolithic quantum
  circuit into smaller, independently executable fragments,
  effectively expanding the computational capacity of
  noisy intermediate-scale quantum (NISQ) devices}
}

\newglossaryentry{twoqoverhead}{
  name={two-qubit routing overhead},
  description={the absolute number of extra two-qubit gates
  added to a circuit during compilation to satisfy the
  connectivity constraints of a specific hardware topology}
}

\newglossaryentry{observabletranscript}{
  name={observable transcript},
  description={the complete metadata surface exposed to the cloud
  provider, formally defined as the union of the submission-time
  transcript $\mathcal{L}$ and the execution-time artefact $\Pi$;
  denoted $\mathcal{T} = \mathcal{L} \cup \Pi$}
}

\newglossaryentry{trustedclient}{
  name={trusted client domain},
  description={the local execution environment in which the
  original unpartitioned circuit, the partitioning decisions,
  and the stitching specification are generated and kept secret
  from the cloud provider}
}

\newglossaryentry{untrustedcloud}{
  name={untrusted cloud service},
  description={the semi-honest provider that manages job
  ingestion, scheduling, and backend execution, executing
  fragment jobs correctly while passively logging classical
  metadata to perform offline statistical inference}
}

\newglossaryentry{wsiind}{
  name={WSI-IND},
  description={the Workload-Structure Inference indistinguishability
  game formalising the security threat, wherein a semi-honest
  adversary attempts to guess the underlying algorithm family
  from the observable compiled metadata transcript}
}

\newglossaryentry{inferenceobjectives}{
  name={inference objective},
  description={the six classification targets pursued by the
  adversary: cutting mechanism (W1), algorithm family (A1),
  sub-family variant (A2), connectivity regime (H1), geometry
  class (H2), and $k$-locality regime (H3). A1 targets eight
  core families (HEA, QAOA, QFT, Random, QML, Sim, Chem,
  Oracle); A2 targets their structural sub-variants such as
  the standard, no-swaps, and approximate variants of QFT
  or the linear, circular, and reverse-linear structures
  of HEA}
}

\usepackage{acronym}

\newacro{API}{Application Programming Interface}
\newacro{AUC}{Area Under the Receiver Operating Characteristic Curve}
\newacro{BQC}{Blind Quantum Computation}
\newacro{HEA}{Hardware-Efficient Ansätz}
\newacro{NISQ}{Noisy Intermediate-Scale Quantum}
\newacro{QAOA}{Quantum Approximate Optimization Algorithm}
\newacro{QFT}{Quantum Fourier Transform}
\newacro{QHE}{Quantum Homomorphic Encryption}
\newacro{QML}{Quantum Machine Learning}
\newacro{QPU}{Quantum Processing Unit}
\newacro{UBQC}{Universal Blind Quantum Computation}

\newacro{OvR}{One-vs-Rest}
\usepackage{svg}

\definecolor{FullGreen}{RGB}{34,139,34}
\definecolor{PartialYellow}{RGB}{255,179,0}

\newcommand{\full}{\textcolor{FullGreen}{\ding{108}}}
\newcommand{\pfull}{\textcolor{PartialYellow}{\ding{108}}}
\newcommand{\none}{\textcolor{black!30}{\ding{109}}}

\newcommand{\Prob}{\mathbb{P}}

\newcommand{\transcript}{\mathcal{T}}
\newcommand{\secret}{\mathcal{S}}
\newcommand{\observable}{\mathcal{L}}
\newcommand{\artefacts}{\Pi}

\numberwithin{equation}{section}
\theoremstyle{plain}

\title{Post-Cut Metadata Inference Attacks on\\
       Quantum Circuit Cutting Pipelines}

\author{
    \IEEEauthorblockN{Samuel Punch, Krishnendu Guha, and Utz Roedig}\\
    \IEEEauthorblockA{School of Computer Science and Information Technology\\
    University College Cork, Ireland\\
    \{samuel.punch, kguha, u.roedig\}@ucc.ie}
}
       
\begin{document}
\maketitle

\begin{abstract}
Quantum cloud providers can identify a user's algorithm and secret problem 
structure without ever seeing actual quantum data, simply by analyzing routine 
metadata collected for billing and system management. Existing confidentiality 
tools such as blind quantum computation and quantum homomorphic encryption 
protect the quantum payload itself, but they do not protect this classical 
orchestration metadata. This leaves an unexplored security risk in the logs 
generated when a large quantum program is split into smaller pieces to fit 
onto limited hardware, a process known as \emph{circuit cutting}.

These fragments leak sensitive information through what we term the 
\textbf{topological transpilation penalty}: the unavoidable depth and gate 
inflation added when a compiler reorganizes a program for a restricted 
hardware topology. Tests on a 156-qubit production \textbf{Quantum Processing 
Unit (QPU)} show that traditional timing side-channels fail in this setting, 
since hardware control-plane delays mask actual quantum execution time. The 
unique shape of the transpilation penalty acts instead as a persistent 
structural fingerprint for the hidden workload.

Using 12,000 circuit fragments across eight algorithm families, our attack 
recovers algorithm family and Hamiltonian $k$-locality with near-perfect 
accuracy, achieving instance-disjoint AUC $= 1.000$ for both. This leakage 
persists under size-holdout evaluation on unseen circuit scales, with AUC 
$= 0.987$ and $0.986$ respectively. The cutting mechanism is inferred with 
AUC $= 0.991$, and hardware topology is recovered well above chance with 
AUC $= 0.818$. These results show that circuit cutting exposes algorithmic 
intent, and potentially proprietary problem structure, through metadata 
alone, without any need to observe quantum data.
\end{abstract}

\begin{IEEEkeywords}
Quantum circuit cutting, quantum cloud security, metadata leakage, side-channel attacks, Hamiltonian structure inference
\end{IEEEkeywords}

\section{Introduction}
As quantum algorithms grow in complexity, they often exceed the physical 
qubit counts and connectivity limits of today's \ac{NISQ} devices. To 
bridge this gap, \emph{circuit cutting} (CC) has become an essential 
technique for executing large circuits by partitioning them into smaller 
fragments that fit on available hardware \cite{Peng_2020, Tang_2021}. 
This problem matters now more than ever as circuit cutting becomes the 
standard for running ``utility-scale'' experiments on hardware with 
100+ qubits. Tools like Blind Quantum Computation (BQC) 
\cite{broadbent2009ubqc} and Quantum Homomorphic Encryption (QHE) 
\cite{Dulek_2018} were designed to hide the quantum data itself: the 
gate sequence, the measurement outcomes, the payload the provider 
executes. Neither protects the classical orchestration metadata 
generated when a job is split into fragments, including fragment 
count, per-fragment size, and submission order. This is the gap that 
existing confidentiality frameworks leave open, and it is the gap this 
paper closes.

While circuit cutting expands the reach of current quantum computers, 
it introduces a critical security risk: the leakage of workload 
metadata. Transitioning from a monolithic job to distributed fragments fundamentally alters the provider's observability window. This process generates a 
detailed \textbf{post-cut metadata transcript}, a collection of 
administrative records (such as fragment width, depth, and gate counts) 
used for billing and system management 
\cite{aws_braket_cloudwatch, ibm_runtime_workload}. Even if the provider 
never sees the actual quantum data, this ``packaging'' data acts as a 
powerful side-channel \cite{9951250}. A curious or ``semi-honest'' 
provider can analyse these logs to perform \emph{workload-structure 
inference}, reconstructing the user's secret algorithm and problem 
structure.

Previous research has focused on ``timing side-channels'' 
\cite{lu2024quantumleaktimingsidechannel, dong2025exploitingtimingsidechannelsquantum}, 
but our validation on a \textbf{156-qubit production quantum computer} 
reveals that execution time stays almost unchanged even when the 
compiled circuit becomes up to \textbf{$23\times$} deeper. The reason is 
that the quantum operations themselves are so brief that they are 
overwhelmed by the fixed classical control and scheduling delays 
surrounding every job, which leaves traditional timing-based observers 
effectively blind. However, these same latencies do not mask the discrete 
structural metadata logged for orchestration; instead, the leakage is driven by the \textbf{topological transpilation penalty}: the unavoidable inflation added by a compiler to make a program fit a restricted 
device coupling graph.

In this paper, we formalise this attack surface across six 
classification objectives. Our contributions are as follows:

\begin{enumerate}
\item \textbf{Identification and Formalisation:} We identify the 
post-cut metadata side-channel and formalise the structural attack surface 
exposed across standard cloud ingestion interfaces.

\item \textbf{Recovering Algorithmic Identity and Structure:} We define and 
evaluate six classification objectives, demonstrating that a provider can recover an 
algorithm's identity and its mathematical structure (\emph{Hamiltonian 
$k$-locality}) with near-perfect accuracy ($\text{AUC} = 1.000$) using 
only three structural metadata features.

\item \textbf{Empirical Dataset:} We provide a large-scale dataset of 
12{,}000 circuit fragments across eight algorithm families, transpiled 
against three hardware topologies.

\item \textbf{\ac{QPU} Validation:} We validate the attack on a 
156-qubit production system, confirming that true execution time is 
invariant across depth variations and isolating the transpilation penalty as 
the primary leakage mechanism.
\end{enumerate}

Metadata leakage must therefore be treated as a primary security
concern in cloud-based quantum systems, on the same footing as
protecting the quantum payload itself. The remainder of this paper
is organized as follows: Section~\ref{sec:related} reviews related
work; Sections~\ref{sec:system-model}--\ref{sec:threat-model}
establish the system and threat models respectively; Section~\ref{sec:mechanism} details the physical leakage mechanism; Sections~\ref{sec:attack_design}--\ref{sec:dataset} describe the attack methodology and empirical dataset; Sections~\ref{sec:evaluation}--\ref{sec:hardware-validation} present the evaluation results and hardware validation; and Sections~\ref{sec:results}--\ref{sec:conclusion} provide discussion on mitigations and concluding remarks.

\section{Related Work and Positioning}
\label{sec:related}

\textbf{Circuit Cutting and Knitting.}
Quantum circuit cutting was formalised by Peng~\emph{et
al.}~\cite{Peng_2020}, who introduced Pauli-based decompositions
enabling the execution of circuits exceeding device size limits.
CutQC~\cite{Tang_2021} provided an end-to-end automated partitioning
and reconstruction framework. Subsequent work refined estimator
efficiency through randomized measurements~\cite{Lowe_2023},
classical-communication-assisted knitting~\cite{Piveteau_2024}, and
generalised cutting interfaces~\cite{Schmitt_2025}. Across these
foundations, the emphasis is on reconstruction accuracy, variance
reduction, and hardware scalability. The confidentiality implications
of fragment-level orchestration are not their primary objective, leaving a critical gap regarding side-channel observability which is the focus of the present work.

\textbf{Circuit Cutting for Obfuscation and Security.}
Typaldos \emph{et al.}~\cite{10821443, 11250202} recently demonstrated that 
circuit cutting can be leveraged for algorithm obfuscation and outlined baseline 
security methodologies for cutting pipelines, while Bernardi \emph{et al.}~\cite{bernardi2025evaluatingsecuritypropertiesexecution} 
examined resilience properties under heuristic allocation. However, these 
frameworks explicitly leave open the question of whether the resulting 
compiled subcircuit structures could be exploited to reconstruct the original 
computational intent. This paper provides the definitive answer to that question: 
yes, they can, and with near-perfect accuracy using only three structural 
numbers logged for routine billing purposes.

\textbf{Quantum Confidentiality and Delegated Computation.}
Protocols such as \ac{UBQC}~\cite{broadbent2009ubqc},
QHE~\cite{Dulek_2018}, verifiable BQC~\cite{Fitzsimons_2017}, and
classical
verification~\cite{mahadev2023classicalverificationquantumcomputations}
were designed to hide the quantum \textit{data itself} from an
untrusted server. They provide strong guarantees for monolithic
execution or interactive protocols, but they do not address the
administrative metadata generated when a program is split into
fragments, the fragment count, the size of each piece, and
the order in which they are submitted. This paper fills those gaps.

\textbf{Quantum Cloud Side Channels.}
Researchers have shown that information can leak through several
physical channels: execution timing~\cite{lu2024quantumleaktimingsidechannel,
dong2025exploitingtimingsidechannelsquantum}, power drawn by the
controller~\cite{Erata_2024,xu2023explorationquantumcomputerpower},
scheduling patterns~\cite{9951250}, and crosstalk between users
sharing the same
QPU~\cite{choudhury2024crosstalkinducedchannelthreatsmultitenant,%
10.1145/3370748.3406570,lee2025swapattackstealthysidechannel}.
All of these attacks share a common assumption: the provider is
running one large, continuous program.

Circuit cutting breaks that assumption. When a program is split into
fragments, the provider no longer sees a single job, it sees a
structured collection of smaller jobs, each with its own width,
depth, and gate count, submitted in a specific order. The leakage
is no longer \textit{inside} the hardware; it is in the
\textit{paperwork} surrounding it. This paper is the first to
formally define and evaluate this new attack surface:
\gls{postcutinference}. \emph{We show that just three numbers, compiled width, depth, and 
two-qubit gate count, carry enough structural information to 
reconstruct the original computational intent with near-perfect 
accuracy.}

Notably, Dong~\emph{et
al.}~\cite{dong2025exploitingtimingsidechannelsquantum} demonstrate
ML-based timing extraction in simulation environments; our empirical
validation on production hardware shows this channel collapses to
chance, motivating the structural channel identified in this work.

\textbf{Relation to Classical Distributed Metadata Leakage.}
The broader pattern we exploit, workload partitioning generating 
classical metadata that leaks structural information, is not unique 
to quantum computing: distributed and multi-tenant classical systems 
leak similarly through scheduling records and network traffic shapes. 
What is specific here is the mechanism. The topological transpilation 
penalty arises from mapping a circuit's logical interaction graph onto 
a fixed, sparse qubit coupling map, a distortion with no direct 
classical analogue at this granularity, since classical schedulers do 
not reshape a program's gate structure to fit a processor's 
interconnect. Circuit cutting therefore inherits a familiar class of 
vulnerability but expresses it through a channel unique to constrained 
quantum hardware.

As summarized in Table~\ref{tab:comparison}, this work is the first to systematically evaluate post-cut transcript leakage under a semi-honest cloud model. To contextualize this threat, we contrast our approach against existing quantum confidentiality literature across five key evaluation dimensions: whether the framework explicitly addresses circuit \emph{cutting}, whether it provides cryptographic \emph{state protection} for the quantum payload, whether it analyzes or prevents classical \emph{metadata} leakage, whether the framework models an explicit \emph{attack vector} (offensive research), and whether the resulting orchestration is \emph{non-interactive}. Unlike standard cryptographic approaches which protect the quantum state inside the execution horizon, our analysis uniquely isolates the classical orchestration transcript.

\begin{table}[t]
\centering
\caption{\textbf{Comparison of quantum confidentiality literature.} Unlike standard cryptographic approaches (UBQC, QHE) which protect the quantum state inside the execution horizon, our analysis isolates the classical orchestration transcript exposed during cloud ingestion.}
\label{tab:comparison}
\small
\renewcommand{\arraystretch}{1.2}
\resizebox{\columnwidth}{!}{%
\begin{tabular}{lccccc}
\toprule
\textbf{Framework} &
\shortstack{\textbf{Cutting}\\\textbf{Target}} &
\shortstack{\textbf{State}\\\textbf{Protection}} &
\shortstack{\textbf{Metadata}\\\textbf{Protected}} &
\shortstack{\textbf{Attack}\\\textbf{Vector}} &
\shortstack{\textbf{Non-}\\\textbf{Interactive}} \\
\midrule
UBQC~\cite{broadbent2009ubqc}           & \none & \full & \none & \none & \none \\
QHE~\cite{Dulek_2018}                   & \none & \full & \none & \none & \full \\
Typaldos~\cite{10821443}                & \pfull & \none & \pfull & \none & \full \\
Bell \& Tr{\"u}gler~\cite{9951250}      & \none & \none & \pfull & \full & \full \\
Lu (Timing)~\cite{lu2024quantumleaktimingsidechannel} & \none & \none & \pfull & \full & \full \\
\midrule
\textbf{This Work}                      & \full & \full & \full & \full & \full \\
\bottomrule
\multicolumn{6}{l}{\footnotesize
  \full~Full \quad \pfull~Partial \quad \none~None}
\end{tabular}%
}
\end{table}

\section{System Model}
\label{sec:system-model}

We model circuit-cutting execution as a two-domain pipeline comprising
a \gls{trustedclient} and an \gls{untrustedcloud}. As detailed in the 
system architecture diagram of Fig.~\ref{fig:system_model}, the client owns 
circuit construction, partitioning, and post-processing, while the cloud 
owns job ingestion, scheduling, and execution. The two domains communicate 
only through the Submission Interface, the sole point at which information
crosses the trust boundary and the only surface visible to an adversary. 
Crucially, the client never hands the cloud a complete picture of the 
computation: the stitching specification, the reconstruction map, and the 
original unpartitioned circuit all remain local. What crosses the boundary 
is only the compiled fragment metadata needed to execute each job. Information 
crossing this boundary is partitioned into the \gls{submissiontranscript}~$\mathcal{L}$
and the \gls{executionartefact}~$\Pi$, both formally defined in
\S\ref{sec:threat-model}.

\paragraph{Execution Pipeline.}
The workflow proceeds through ten stages:
(1)~workload specification, (2)~circuit partitioning and preparation,
(3)~transcript preparation, (4)~job submission across the trust
boundary, (5)~cloud ingest and scheduling (where execution
artefacts~$\Pi$ first become observable), (6)~backend execution,
(7)~result serialisation, (8)~result retrieval, (9)~error mitigation
and stitching, and (10)~application output. The stitching
specification and reconstruction map remain local to the client
throughout and never cross the trust boundary.

\begin{figure}[t]
  \centering
  \includegraphics[width=1.0\columnwidth]{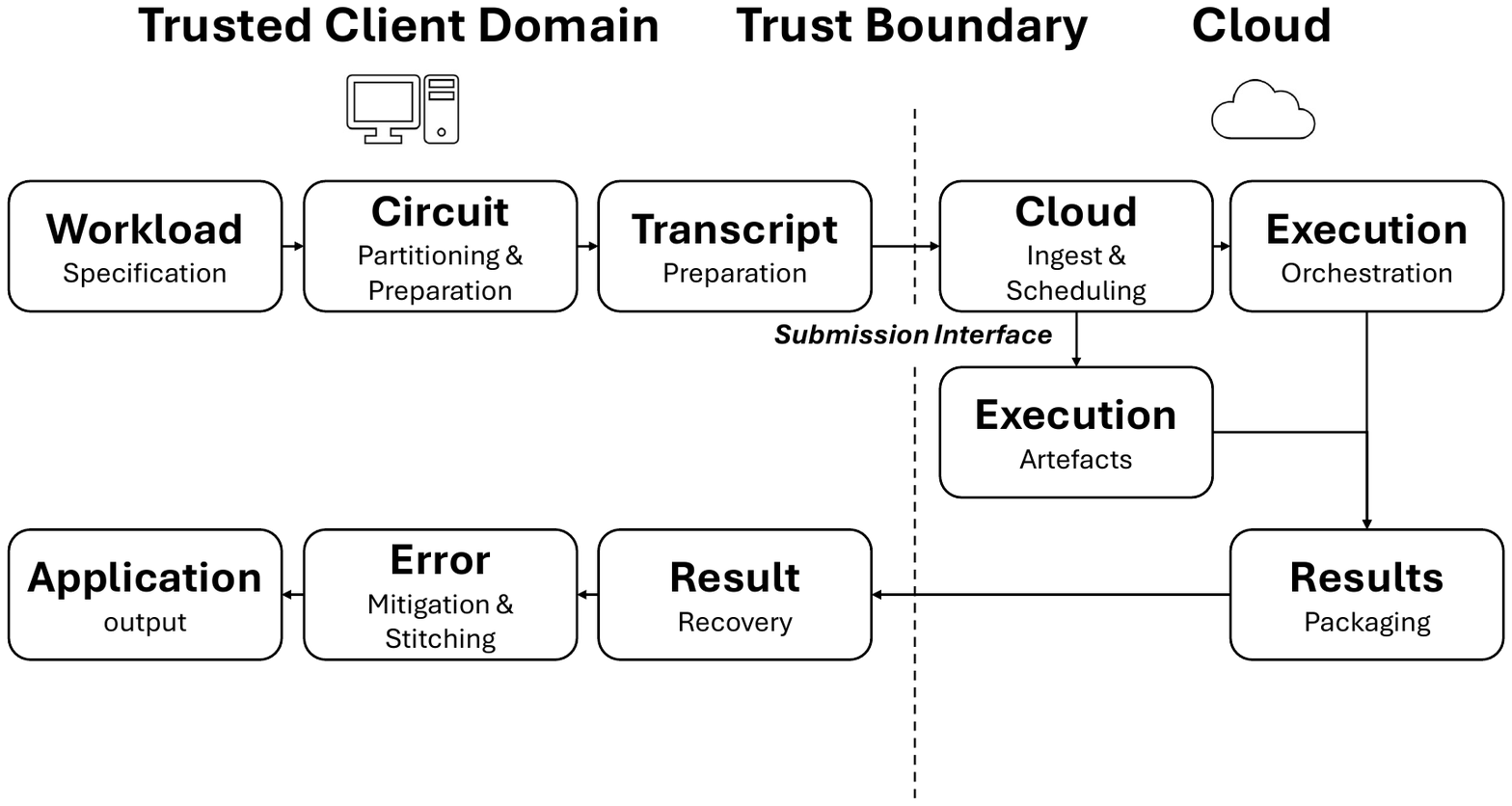}
  \caption{\textbf{System Architecture.} End-to-end circuit-cutting workflow bridging the trusted client and untrusted cloud domains.}
  \label{fig:system_model}
\end{figure}

\paragraph{Information Domains.}
We partition information into four domains:

\begin{itemize}[leftmargin=*]
  \item \textbf{Secret State~($\mathcal{S}$).} Exists solely within
    the trusted client domain: the original circuit~$G$,
    partitioning decisions, and the stitching specification.
    This is what the adversary seeks to recover.

  \item \textbf{Public Parameters~($\mathcal{P}$).} Known to both
    parties: hardware gate sets and device noise characteristics.

  \item \textbf{\Gls{observabletranscript}~($\mathcal{L}$).} The
    metadata surface exposed at the Submission Interface and during
    Cloud Ingest, comprising the prepared
    transcript~$\mathcal{L}$ and the execution artefacts~$\Pi$.
    This is the only information available to the adversary.

  \item \textbf{Ideal Assumptions.} We assume \textit{Execution
    Integrity} (faithful execution) and \textit{Client Isolation}
    (no leakage outside the defined interface), bounding the
    adversary to passive observation of $\mathcal{L}$ and $\Pi$.
\end{itemize}

\section{Threat Model}
\label{sec:threat-model}

\paragraph{Adversary Model.}
The provider acts as a \gls{semihonest} (honest-but-curious) adversary: it executes
fragment jobs correctly but passively logs the classical metadata
available at the Submission Interface. Its advantage derives solely
from offline statistical inference. It does \emph{not} observe
measurement outcomes, circuit payloads, compiled gate sequences,
qubit mapping, or physical-layer telemetry. 

\paragraph{Observable Transcript.}
The provider's observations are partitioned into two sub-transcripts
corresponding to distinct observability windows, an extraction process 
outlined in our threat model diagram, as shown in Fig.~\ref{fig:threat_model}. 
The submission-time transcript $\mathcal{L}$ is observable at the 
Submission Interface:
\[
\mathcal{L} = \{(w_i,\, d_i,\, q_i,\, s_i,\, t_i)\}_{i=1}^{N},
\]
where $w_i$, $d_i$, $q_i$ denote compiled width, depth, and 2Q gate 
count; $s_i$ the shot allocation; and $t_i$ the timestamp. The 
execution-time artefact $\Pi = \{(\pi_i,\, \tau_i)\}_{i=1}^{N}$ 
captures execution order and raw QPU duration. 

We establish the operational realism of these features in Table~\ref{tab:telemetry_mapping}, 
which maps our abstract transcript definitions directly to routine telemetry 
captured by production cloud services for billing and orchestration. For example, Amazon 
Braket logs per-task metadata including circuit structure and shot 
counts via Amazon CloudWatch and AWS 
CloudTrail~\cite{aws_braket_terms, aws_braket_cloudwatch}, while 
IBM Qiskit Runtime's billing model is explicitly parameterised by 
compiled circuit length and execution count, with granular per-job 
metrics retained by the service~\cite{ibm_runtime_workload, 
ibm_runtime_jobv2}. Consequently, the adversary requires no 
malicious modifications to the physical control stack or compilation 
pipeline to observe the transcript $\mathcal{L}$.

\begin{figure}[t]
  \centering
  \includegraphics[width=1.0\columnwidth]{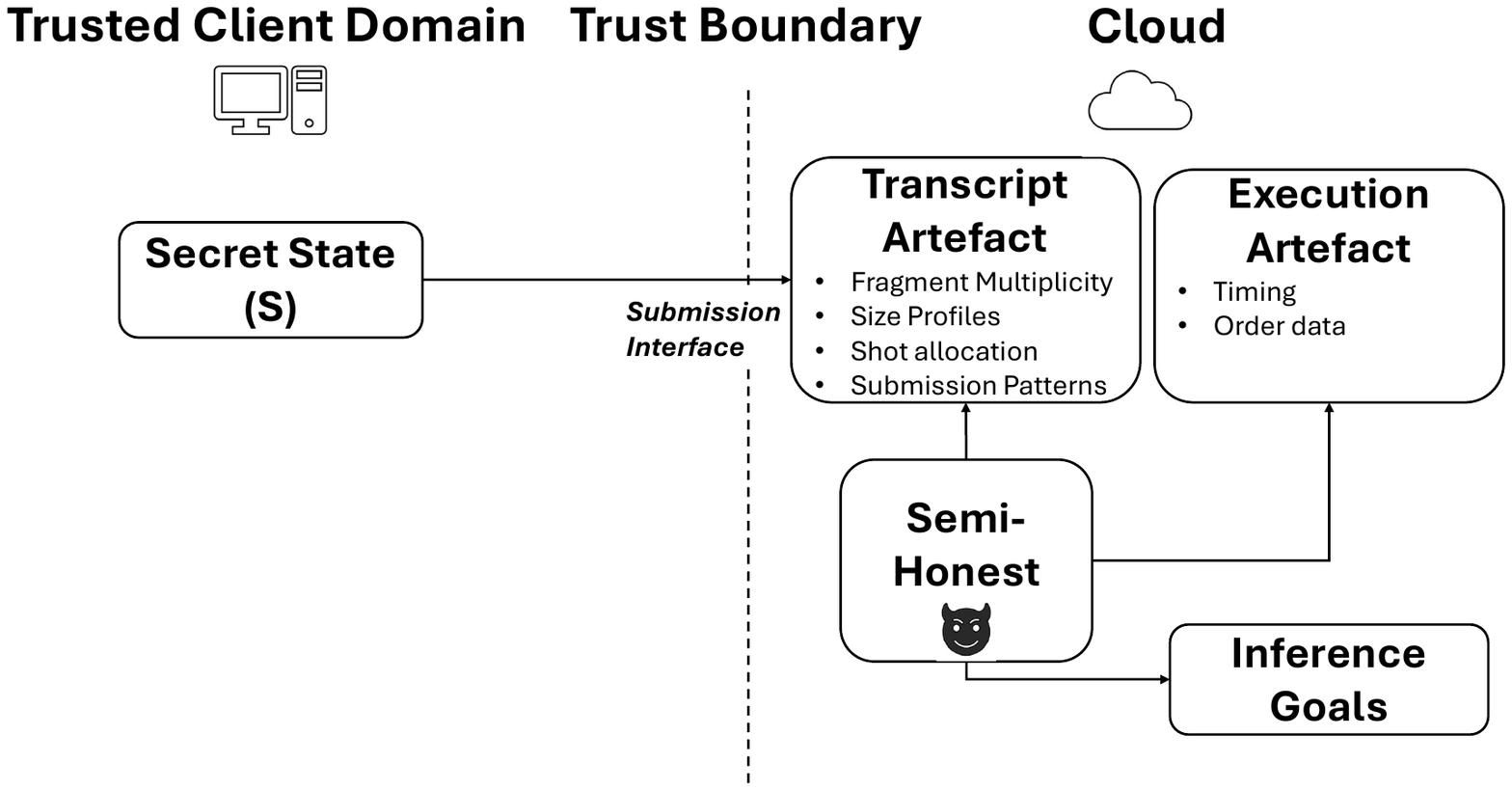}
  \caption{\textbf{Threat Model.} A semi-honest cloud provider passively extracts metadata from the observable transcript ($\mathcal{L} \cup \Pi$) to infer properties of the secret state~$\mathcal{S}$.}
  \label{fig:threat_model}
\end{figure}

\begin{table}[h]
\centering
\caption{Mapping of abstract transcript features to production cloud telemetry.}
\label{tab:telemetry_mapping}
\renewcommand{\arraystretch}{1.2}
\begin{tabular}{ll p{4.5cm}} 
\toprule
\textbf{Feature} & \textbf{Telemetry Source} & \textbf{Definition} \\
\midrule
$d_i$ & \texttt{tasks.depth} & Compiled circuit depth \\
$w_i$ & \texttt{tasks.width} & Active qubit count \\
$q_i$ & \texttt{tasks.2q\_count} & Compiled multi-qubit gates \\
$s_i$ & \texttt{tasks.shots} & User-defined shot allocation \\
$\tau_i$ & \texttt{execution\_spans} & Raw QPU runtime (nanoseconds) \\
$g_i$ & \texttt{job\_id / taskArn} & Job-group identifier (links the $K$ fragments of one cut) \\
\bottomrule
\end{tabular}
\end{table}

\paragraph{Adversary Assumptions and Limitations.}
Our threat model assumes a worst-case semi-honest provider under 
three specific operational conditions:
\begin{enumerate}[leftmargin=*]
    \item \textbf{Fragment Isolation:} The provider can group fragments 
    belonging to a single cutting job (of size $K$). This is 
    operationally feasible via shared \ac{API} tokens, correlation of 
    submission timestamps, or explicit job-grouping headers (e.g., 
    Braket's \texttt{taskArn} or Qiskit's \texttt{\detokenize{job_id}}).
    \item \textbf{Compiler Stability:} The provider's internal 
    transpilation behavior is deterministic and stable during the 
    attack window. We discuss the impact of compiler versioning 
    in \S\ref{sec:robustness}.
    \item \textbf{Passive Observation:} The adversary derives advantage 
    exclusively from offline statistical inference. Our results 
    thus represent an upper bound on leakage under perfect 
    segmentation.
\end{enumerate}

\subsection{Formal Security Model}

Intuitively, we want to ask: if a provider sees only the compiled 
metadata of a job, could it tell the difference between two 
candidate workloads a user might have run? We formalise this 
question as $\textsc{wsi-ind-}K$, an indistinguishability game between 
a challenger and the semi-honest provider acting as adversary $\mathcal{A}$.

The parameter $K$ is the fragment
multiplicity: the number of sub-circuits a workload is cut into and submitted
as one job. The game asks whether $\mathcal{A}$ can tell which of two candidate
workloads produced an observed transcript, using only the cloud-visible
metadata $\transcript = \observable \cup \artefacts$.

\begin{enumerate}[leftmargin=2em,label=\arabic*.]
  \item \textbf{Setup.} $\mathcal{A}$ submits two workload specifications
    $G_0, G_1$ of equal logical width, drawn from the public family
    set $\mathcal{F}$.
  \item \textbf{Challenge.} The challenger samples
    $b \xleftarrow{\$} \{0,1\}$, cuts $G_b$ into $K$ fragments under the
    deterministic client cutting policy, and transpiles each fragment
    against a backend in the public set $\mathcal{P}$.
  \item \textbf{Observation.} The challenger returns the transcript
    $\transcript_b = \{(w_i, d_i, q_i, s_i, t_i, \pi_i, \tau_i)\}_{i=1}^{K}$.
    No quantum state, measurement outcome, gate sequence, or qubit mapping
    is revealed.
  \item \textbf{Guess.} $\mathcal{A}$ outputs $b' \in \{0,1\}$ and wins
    if $b' = b$.
\end{enumerate}

The adversary's advantage is
$\mathsf{Adv}^{\mathrm{WSI}}_{\mathcal{A}}(K) = |\Prob[b' = b] - \tfrac{1}{2}|$.
A pipeline offers \emph{transcript indistinguishability} if this advantage is
negligible for every efficient $\mathcal{A}$; our results show the opposite.
The one-vs-rest Macro-AUC of \S\ref{sec:evaluation} is the multi-class
generalisation of this two-workload advantage: an $\mathrm{AUC}$ of $1.000$
corresponds to a distinguisher that wins the binarised game with certainty,
i.e.\ $\mathsf{Adv} = \tfrac{1}{2}$.

\paragraph{Why Metadata Leaks More Than Timing.}
The advantage above is carried by fragment \emph{shape}, not run \emph{time}.
Proposition~\ref{prop:leakage-decomp} makes this precise: timing is a
downstream function of the compiled fragment, so it cannot reveal more about
the secret than the structural metadata it derives from.

\begin{proposition}[Leakage Decomposition]
\label{prop:leakage-decomp}
Let \(\secret\) be the secret workload label, and let \(C\) be the full
compiled fragment produced by the transpiler. We write
\(\transcript = (\observable, \artefacts)\), where \(\observable\) is the
compiled structural metadata (width, depth, 2Q count) and \(\artefacts\) the
execution-time artefact (order, raw QPU duration). Since QPU duration is
generated by the control plane as a function of the compiled fragment, the
variables form the Markov chain
\(\secret \to C \to \artefacts\), whence
\[
  I(\secret; \artefacts) \;\leq\; I(\secret; C).
\]
In our implementation, the structural metadata \(\observable\) captures the
scheduling-relevant features of \(C\); the estimates below therefore bound the
leakage through the observable channel as
\(I(\secret; \artefacts) \le I(\secret; \observable)\).
\end{proposition}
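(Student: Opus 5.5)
The plan is to reduce the first inequality to the data-processing inequality and the second to a sufficiency (Markov) condition on $\observable$. First, I make the Markov structure explicit. The client cutting policy and the transpiler map the secret $\secret$ to the compiled fragment $C$ via some channel $P_{C\mid\secret}$, possibly randomised by compiler seeds. The control plane then produces $\artefacts=(\pi,\tau)$ from $C$ via a channel $P_{\artefacts\mid C}$.

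The key modelling claim is that this second channel uses no further information about $\secret$. Any extra randomness in it (queue state, control-plane latency, calibration drift) must be independent of $\secret$ given $C$. Under the Execution Integrity and Client Isolation assumptions of \S\ref{sec:system-model}, this holds. Only compiled fragments cross the Submission Interface, and the stitching specification stays local. This gives the conditional independence $\artefacts \perp \secret \mid C$, i.e.\ the chain $\secret\to C\to\artefacts$.

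With the chain in place, I expand $I(\secret; C,\artefacts)$ by the chain rule in two ways:
\[
I(\secret;C)+I(\secret;\artefacts\mid C)=I(\secret;\artefacts)+I(\secret;C\mid\artefacts).
\]
The Markov property gives $I(\secret;\artefacts\mid C)=0$, and $I(\secret;C\mid\artefacts)\ge 0$. Together these yield $I(\secret;\artefacts)\le I(\secret;C)$. Since all variables are discrete or finite-valued (metadata, labels, integer durations), the entropies are finite and the manipulation is routine.

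For the second inequality I cannot simply use $I(\secret;C)\ge I(\secret;\observable)$, because that points the wrong way. I will instead need a stronger hypothesis, and I will state it explicitly: $\observable=f(C)$ is a \emph{scheduling sufficient statistic}. This means the control plane's distribution of $\artefacts$ depends on $C$ only through $\observable$ (plus $\secret$-independent noise), so that $\secret\to C\to\observable\to\artefacts$ is Markov. Applying the data-processing step again to the chain $\secret\to\observable\to\artefacts$ gives $I(\secret;\artefacts)\le I(\secret;\observable)$.

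The main obstacle is justifying this sufficiency condition; it is an assumption about the control plane, not a theorem. My first attempt will argue that execution duration is determined by depth, gate counts, width and shots, together with $\secret$-independent overheads. That argument covers $\tau$; the order component $\pi$ arises from submission timestamps and queueing, which are likewise $\secret$-independent given $\observable$. Where this idealisation fails, for example because gate-type mix affects duration, I will note that the bound degrades gracefully. In that case $I(\secret;\artefacts)\le I(\secret;\observable)+I(\secret;\artefacts\mid\observable)$, and the residual term is the quantity the hardware validation in \S\ref{sec:hardware-validation} is meant to show is empirically negligible.
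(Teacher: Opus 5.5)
Your treatment of the first inequality matches the paper's. It also takes $\artefacts\perp\secret\mid C$ as the modelling premise (the scheduler and pulse generator see only the compiled circuit) and cites the data processing inequality, which you simply unroll via the two chain-rule expansions of $I(\secret;C,\artefacts)$.

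For the second inequality you take a genuinely different route. The paper never derives $I(\secret;\artefacts)\le I(\secret;\observable)$ from a Markov structure. It says informally that $\observable$ ``accounts for the bulk of the scheduling-dependent variation''. It then instantiates the claim with the estimates of \S\ref{sec:channels}: $I(\secret;\observable)\approx 2.89$ bits against $I(\secret;\artefacts)<0.05$ bits. The inequality is thus checked numerically rather than proved. You rightly note that $I(\secret;\artefacts)\le I(\secret;C)$ cannot deliver it, since $I(\secret;\observable)\le I(\secret;C)$ as well. You turn the informal phrase into an explicit sufficiency hypothesis making $\secret\to\observable\to\artefacts$ Markov, after which data processing applies again. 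Your slack bound $I(\secret;\artefacts)\le I(\secret;\observable)+I(\secret;\artefacts\mid\observable)$ is the precise content of the paper's ``bulk''. Your version gives a conditional theorem with a testable hypothesis and a quantified failure mode. The paper's empirical comparison does not need that hypothesis, which is unlikely to hold exactly, but it is valid only for the measured distribution.

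Two things to tighten. First, your sufficiency argument lists shots among the determinants of duration, yet the $\observable$ of the proposition is only (width, depth, 2Q count). You must either assume the shot allocation is conditionally independent of $\secret$ given $\observable$, or enlarge $\observable$ to the threat-model transcript containing $s_i$ and $t_i$. The first option is true in the paper's setting, where shots are uniform and the shots-only AUC is $0.500$. The same applies to the timestamps you use to handle the order component $\pi$. Second, Execution Integrity and Client Isolation do not by themselves imply $\artefacts\perp\secret\mid C$; simply take the chain as the premise the proposition supplies.
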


\begin{proof}[Proof sketch]
Conditioned on the full compiled fragment \(C\), the runtime component of
\(\artefacts\) is independent of \(\secret\): the scheduler and pulse
generator observe only the compiled circuit, not the logical intent. This
gives the Markov chain, and the Data Processing
Inequality~\cite{cover2006elements} yields the bound. In practice, our
compiled metadata \(\observable\) accounts for the bulk of the
scheduling-dependent variation; the estimates of \S\ref{sec:channels}
instantiate this: \(I(\secret; \observable) \approx 2.89\) bits against a
\(3.0\)-bit maximum, while \(I(\secret; \artefacts) < 0.05\) bits, consistent
with the QPU timing invariance of \S\ref{sec:hardware-validation}.
\end{proof}

The adversary's advantage therefore comes from the structural metadata
captured by \(\observable\), and specifically from the overhead added when
fragments are adapted to the hardware's physical layout.

\paragraph{Adversary Capabilities.}
We assume a passive adversary who does not interfere with the computation. To ``learn" what different algorithms look like, the provider builds a reference library using two methods:
\begin{itemize}[leftmargin=*]
    \item \textbf{Known Benchmarks:} Running open-source quantum circuits to see what their metadata "fingerprints" look like.
    \item \textbf{Historical Logs:} Using previously labeled jobs from research papers or public repositories to train their inference models.
\end{itemize}
This allows the provider to recognize unseen user workloads by comparing them against their pre-trained library.

\begin{figure*}[!t]
  \centering
  \begin{subfigure}{\textwidth}
    \centering
    \includegraphics[width=0.95\textwidth]{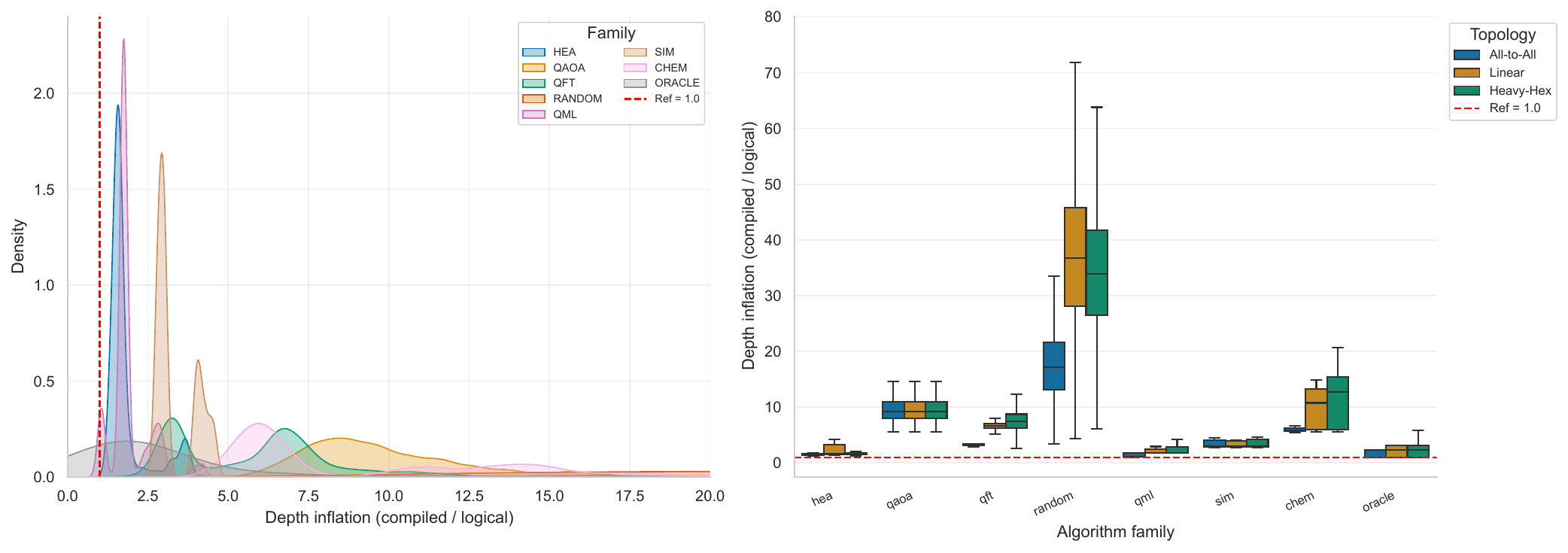}
    \caption{\textbf{Depth Inflation Characteristics.} Distribution of compilation-induced depth overhead across hardware topologies. The significant inflation in \textit{Heavy-Hex} and \textit{Linear} backends relative to the \textit{All-to-All} baseline (dashed red line) constitutes the primary structural leakage signal.}
    \label{subfig:depth_inflation}
  \end{subfigure}
  
  \vspace{1.2em} 
  
  \begin{subfigure}{\textwidth}
    \centering
    \includegraphics[width=0.95\textwidth]{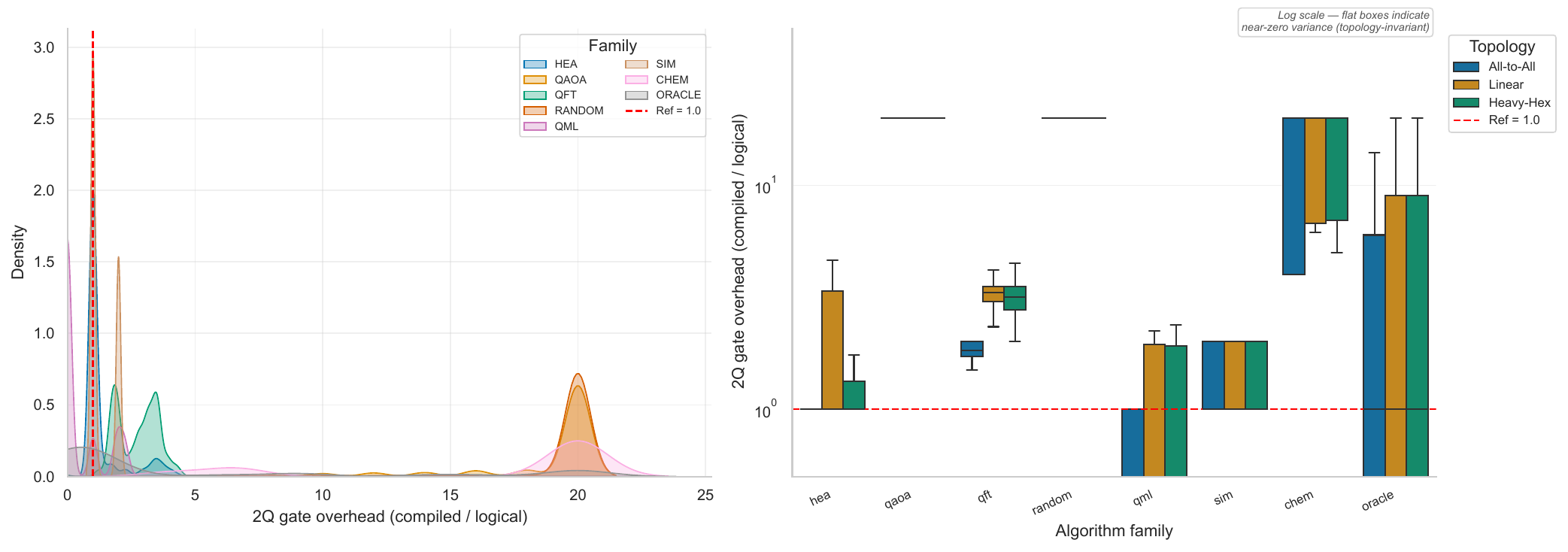}
    \caption{\textbf{Two-Qubit Routing Overhead.} Comparative density of additional 2Q gates required to satisfy connectivity constraints. The clear separation between algorithm families (e.g., QFT vs. HEA) underscores the fidelity of the transpilation penalty as an algorithmic fingerprint.}
    \label{subfig:routing_overhead}
  \end{subfigure}
  
  \caption{\textbf{Topological transpilation penalty analysis.} Distributions of (a)~depth inflation and (b)~two-qubit gate overhead across the three hardware topologies for the 12{,}000-fragment dataset.}
  \label{fig:routing_tax}
\end{figure*}

\begin{figure*}[t]
  \centering
  \includegraphics[width=\linewidth]{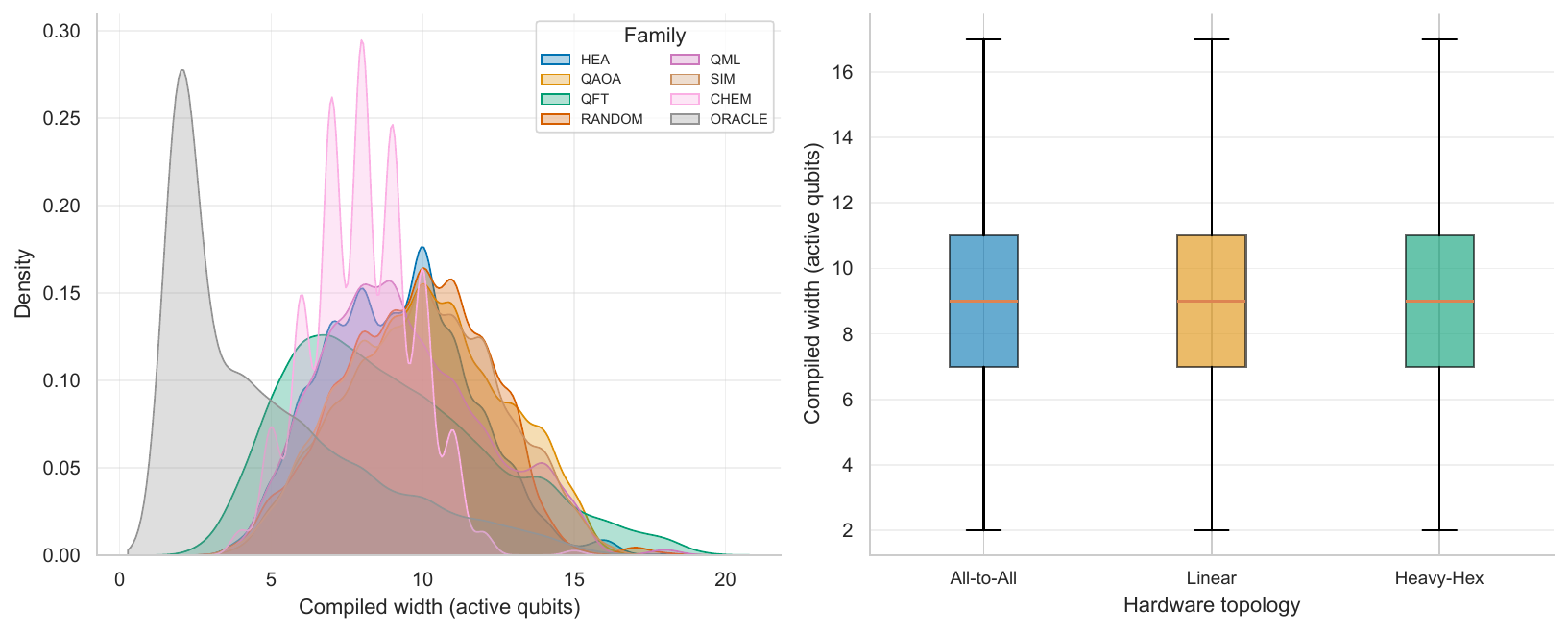}
\caption{Compiled width distributions: (a)~by algorithm family
(all backends pooled) and (b)~by hardware topology (all families
pooled). Width is family-discriminative but topology-invariant,
contrasting directly with depth inflation, as shown in
Fig.~\ref{fig:routing_tax}\subref{subfig:depth_inflation}.}
  \label{fig:width_distributions}
\end{figure*}

\section{Physical Mechanisms: Topological Transpilation Penalty}
\label{sec:mechanism}

The leakage of algorithmic identity through compiled metadata is driven by the \textbf{topological transpilation penalty}: the deterministic expansion of a circuit when its logical interaction graph is mapped onto a constrained physical coupling map. Because different algorithm families possess distinct logical connectivity requirements, the transpiler must insert varying quantities of SWAP logic to satisfy hardware adjacency constraints. We characterize this leakage through three complementary observables: depth inflation, 2Q gate inflation, and compiled width. Fig.~\ref{fig:routing_tax} shows how the first two separate the algorithm families across the three hardware topologies: hardware-restricted compilation deterministically encodes structural properties into the observable metadata, and the pronounced inflation on the Heavy-Hex and Linear backends forms the primary leakage signal.

\subsection{Characterizing Penalties by Algorithm Taxonomy}

As detailed in Table~\ref{tab:full_routing_tax}, the transpilation penalty is non-uniform across the 8-family taxonomy. We identify three distinct scaling profiles that serve as the primary discriminative signatures:

\begin{itemize}[leftmargin=*]
    \item \textbf{Low-Penalty Profiles (HEA, QML, Oracle):} These families exhibit the highest hardware compatibility. HEA and QML achieve this through hardware-aligned entanglement layers, while Oracle circuits (e.g., Grover, BV) maintain low transpilation penalties due to their sparse interaction graphs and minimal logical widths. On all-to-all architectures, their transpilation penalty is negligible ($<$10 extra 2Q gates), scaling only modestly as hardware connectivity becomes more restricted.
    
    \item \textbf{Topology-Invariant Profiles (QAOA, Sim):} These families represent Trotterized evolutions where the interaction structure remains fixed regardless of the target backend. For instance, QAOA maintains a constant mean penalty of 35.7 extra 2Q gates across all topologies, while Sim (Ising/Heisenberg) exhibits a perfectly stable penalty of 6.1 gates. This invariance allows the adversary to identify these workloads even when the underlying hardware architecture is obscured.

    \item \textbf{Topology-Sensitive Profiles (QFT, Chem, Random):} These families require high-degree or global connectivity. The QFT penalty nearly triples on heavy-hex compared to all-to-all, while Random circuits incur a massive penalty of 699.3 extra gates on linear backends. The explosive scaling of these metrics distinguishes global-entanglement algorithms from those with localized or linear interaction graphs.
\end{itemize}

\begin{table}[t]
\centering
\caption{Topological Transpilation Penalty Across the 8-Family Taxonomy. Values represent mean gate inflation across all sub-variants and fragment instances.}
\label{tab:full_routing_tax}
\small
\setlength{\tabcolsep}{8pt}
\renewcommand{\arraystretch}{1.1}
\begin{tabular}{clrr}
\toprule
\textbf{Topology} & \textbf{\makecell{Algorithm\\Family}} & \textbf{\makecell{Extra 2Q\\Gates}} & \textbf{\makecell{Depth\\Inflation}} \\
\midrule
\multirow{8}{*}[-1ex]{\rotatebox[origin=c]{90}{All-to-All}} 
 & HEA    & 0.0   & 1.5 \\
 & QML    & 0.0   & 1.5 \\
 & Sim    & 6.1   & 3.4 \\
 & Oracle & 8.7   & 3.7 \\
 & QAOA   & 35.7  & 9.8 \\
 & QFT    & 36.8  & 3.2 \\
 & Chem   & 148.3 & 6.0 \\
 & Random & 251.2 & 17.5 \\
\midrule
\multirow{8}{*}[-1ex]{\rotatebox[origin=c]{90}{Heavy-Hex}} 
 & HEA    & 5.0   & 1.7 \\
 & QML    & 62.8  & 2.2 \\
 & Sim    & 6.1   & 3.4 \\
 & Oracle & 14.4  & 4.8 \\
 & QAOA   & 35.7  & 9.8 \\
 & QFT    & 107.9 & 7.6 \\
 & Chem   & 286.0 & 11.8 \\
 & Random & 564.2 & 34.3 \\
\midrule
\multirow{8}{*}[-1ex]{\rotatebox[origin=c]{90}{Linear}} 
 & HEA    & 19.9  & 2.2 \\
 & QML    & 53.6  & 2.0 \\
 & Sim    & 6.1   & 3.3 \\
 & Oracle & 16.9  & 5.4 \\
 & QAOA   & 35.7  & 9.8 \\
 & QFT    & 103.8 & 6.5 \\
 & Chem   & 289.2 & 10.0 \\
 & Random & 699.3 & 37.5 \\
\bottomrule
\end{tabular}
\end{table}

\subsection{Compiled Width: An Invariant Structural Metric}

While the transpilation penalty is a variable metric dictated by hardware, \textbf{compiled width} (active qubit count) is an invariant property determined by the user's circuit-cutting partitioning strategy. 

As demonstrated in Section~\ref{sec:evaluation}, width serves as a primary discriminator. Because fragment width remains stable across different topologies—for example, Oracle fragments consistently cluster at 2--3 qubits—it allows the adversary to immediately isolate low-width logic from large-scale variational workloads. This invariance ensures that even if the hardware topology is optimized to minimize this transpilation penalty, the fundamental scale of the fragments still facilitates identity inference.

\subsection{Dimensionality of the Feature Set}

The three observables provide a multi-dimensional characterization of each fragment:
\begin{enumerate}
    \item \textbf{Depth Inflation} quantifies the mismatch between the logical graph and the physical topology.
    \item \textbf{2Q Gate Count} reflects the interaction density of the underlying Hamiltonian.
    \item \textbf{Compiled Width} exposes the logical scale and partitioning granularity.
\end{enumerate}
Collectively, these dimensions allow the inference model to resolve identities even when individual metrics between algorithm families overlap.

\section{Attack Design and Methodology}
\label{sec:attack_design}

\subsection{Attacker Models}

We restrict our attacker models to tree-based ensembles. The
transcript features are low-dimensional (five raw fields per
fragment), heterogeneously scaled (width and depth span different
numeric ranges than shot counts and timestamps), and tabular rather
than sequential or image-like. Tree ensembles handle this profile
natively, without requiring feature normalisation or a large
training set to find effective split points, and their inductive
bias favours axis-aligned thresholds that map naturally onto
structural features like depth and gate count. We therefore evaluate
three ensemble variants to check that the leakage signal is not an
artefact of a single model's inductive bias, rather than presenting
one classifier as the maximal achievable attacker:

\begin{itemize}[leftmargin=*]
  \item \textbf{Random Forest (RF)}: headline attacker. Robust to
    feature scaling, strong on tabular data.
  \item \textbf{ExtraTrees (ET)}: stronger non-linear baseline, using
    randomised split thresholds to reduce variance.
  \item \textbf{HistGradientBoosting (HGB)}: gradient boosting
    baseline, included to confirm the signal is not specific to
    bagging-based ensembles.
\end{itemize}

As shown in Table~\ref{tab:attacker_comp}, all three converge to
within 0.005 AUC of each other on every task, confirming the
leakage is a property of the transcript rather than of any one
model family.

All training is offline and computationally feasible (under 8 minutes
on commodity hardware), consistent with a provider who caches
metadata and trains models asynchronously.

\subsection{Evaluation Protocols}

\begin{itemize}[leftmargin=*]
  \item \textbf{\Gls{instancedisjoint}:} No cutting job appears in
    both train and test sets. Prevents memorisation of specific
    circuit instances.

  \item \textbf{\Gls{sizeholdout}:} The largest-width quartile is
    held out as test. Tests generalisation to unseen workload scales.

  \item \textbf{\Gls{matchedfootprint}:} A caliper of 0.20 in
    normalised (width, depth, fragment count) space restricts
    evaluation to footprint-matched subsets. Removes coarse scale
    cues and tests whether residual leakage survives equalisation.
\end{itemize}

We report Accuracy~(ACC), Macro-F1, and Macro-AUC~(\ac{OvR}) with 95\%
bootstrap confidence intervals ($B{=}1000$). These specific metrics were selected to handle the multi-class nature of our inference objectives: Macro-AUC ensures that classification performance is evaluated equally across all structural sub-variants, preventing dominant algorithm families from artificially inflating the success rate. For binary W1 cut mechanism, AUC uses the positive-class probability directly.

\section{Empirical Dataset Construction}
\label{sec:dataset}

\subsection{Generative Pipeline}
The dataset is constructed via a standardized three-stage pipeline: (i)~\textbf{Logical Specification}, where the core algorithm class is defined; (ii)~\textbf{Sub-variant Assignment}, where structural modifications (e.g., connectivity patterns) are applied to ensure diversity; and (iii)~\textbf{Topology-Restricted Transpilation}, where logical circuits are mapped to specific hardware graphs. 

We generate 12,000 unique logical fragments across eight algorithm families, expanded to 36,000 records by transpiling each fragment against three distinct hardware topologies. To maintain a realistic experimental context, fragments are grouped into cutting jobs of $K{=}6$. Specific distribution counts and sub-variant definitions are detailed in Table~\ref{tab:dataset_dist} (see Appendix~\ref{app:dataset}); family counts are balanced to ensure the inference models learn structural patterns rather than class-frequency biases.

\subsection{Structural Classification Regimes}
Rather than treating families in isolation, we categorize the dataset into three scaling regimes based on their entanglement density and resulting ``transpilation penalty.''

\paragraph{Variational and Shallow Ansätze} 
This regime includes \textbf{HEA, \ac{QML}, and Sim} (Trotterized simulation). These circuits are characterized by local, often linear connectivity. Because their logical structure closely aligns with physical hardware graphs, they incur minimal routing overhead. They represent the baseline of the metadata channel, where compiled features remain closest to logical specifications.

\paragraph{Globally Entangled Structures} 
This regime comprises \textbf{QFT, Chem, and Random} circuits. These algorithms rely on dense, long-range interactions that do not map naturally to restricted hardware topologies (like heavy-hex). Consequently, they trigger extensive SWAP-gate insertion during compilation. This massive, deterministic depth inflation serves as the primary discriminative signal for our attack.

\paragraph{Intermediate Structures} 
\textbf{QAOA and Oracle} circuits occupy a middle ground. While they possess more complex connectivity than variational ansätze, their structures often contain Hamiltonian-specific symmetries that allow for partial topological alignment. Their metadata signatures exhibit moderate depth scaling that sits between the shallow and global regimes.

\subsection{Label Assignment and Statistical Alignment}
The six inference labels, A1 (algorithm family), W1 (cut mechanism), 
W2 (backend topology), and H1--H3 (Hamiltonian connectivity, geometry, 
and $k$-locality), describe properties of the original \emph{logical} 
circuit and are assigned before compilation. The model must therefore 
recover these properties from compiled metadata alone, demonstrating 
that any leakage we observe reflects genuine algorithmic intent rather 
than compiler-specific artifacts.

To check that our simulated dataset is a fair stand-in for real 
workloads, we compare its distributions against production telemetry 
from \S\ref{sec:hardware-validation}. Table~\ref{tab:validation} reports 
this comparison using two complementary distance measures: the 
Kolmogorov--Smirnov statistic (KS), which captures the maximum gap 
between cumulative distributions, and the 1-Wasserstein distance 
($W_1$, also known as the Earth Mover's distance), which captures the 
average displacement needed to align one distribution with the other. 
Lower values indicate closer agreement for both.

Alignment is strongest for circuit width: KS=0.148, with means agreeing 
to within 0.06 qubits. Compiled depth shows moderate agreement 
(KS=0.197). 2Q gate count shows similar moderate agreement (KS=0.185). 
For both depth and 2Q, however, the dataset mean runs notably higher 
than the hardware mean.

This mean gap is expected and reflects how the two datasets were 
constructed. The dataset deliberately samples multiple structural 
sub-variants per family, for example, Random shallow/medium/deep, 
or HEA linear/circular/reverse, producing a long right tail of 
high-overhead circuits. The hardware experiment, by design, runs a 
single representative fragment per family at each width. Means are 
sensitive to this tail; medians are not. The median column shows 
that the bulk of the dataset sits in the same regime as production 
hardware on all three observables, confirming the dataset spans the 
operating point of real workloads rather than diverging from it.

We also performed a \textit{depth-restricted evaluation} to confirm 
signals remain robust within empirical hardware bounds. As shown in 
\S\ref{sec:robustness}, the attack maintains near-perfect accuracy 
(AUC 1.000) when restricted to the exact depth range observed on 
production hardware (18 to 400 gates). This confirms that the transpilation penalty is a robust physical signal, not a simulation artifact.

\begin{table}[t]
\centering
\caption{Distributional alignment between simulated dataset and 
\texttt{ibm\_marrakesh} telemetry. Lower $W_1$ and KS indicate 
closer agreement.}
\label{tab:validation}
\begin{tabular}{lrrrrrr}
\toprule
& \multicolumn{2}{c}{\textbf{Distance}} & 
\multicolumn{2}{c}{\textbf{Mean}} & 
\multicolumn{2}{c}{\textbf{Median}} \\
\cmidrule(lr){2-3} \cmidrule(lr){4-5} \cmidrule(lr){6-7}
\textbf{Metric} & $W_1$ & KS & dataset & Emp. & dataset & Emp. \\
\midrule
Width  & 0.87  & 0.148 & 8.61   & 8.55   & 9.0  & 8.5  \\
Depth  & 56.25 & 0.197 & 137.36 & 129.36 & 57.0 & 80.0 \\
2Q     & 63.45 & 0.185 & 126.59 & 66.33  & 40.0 & 28.5 \\
\bottomrule
\end{tabular}
\end{table}

\FloatBarrier

\section{Inference Evaluation}
\label{sec:evaluation}

\begin{table}
\centering
\caption{Headline results: instance-disjoint~(ID) and
size-holdout~(SH) performance (RF, full features, 95\%
bootstrap CIs).}
\label{tab:headline}
\footnotesize
\setlength{\tabcolsep}{3pt}
\renewcommand{\arraystretch}{1.1}
\begin{tabular}{@{}lcccccc@{}}
\toprule
\multirow{2}{*}{\textbf{Task}} &
\multicolumn{3}{c}{\textbf{ID}} &
\multicolumn{3}{c}{\textbf{SH}} \\
\cmidrule(lr){2-4}\cmidrule(l){5-7}
& ACC & F1 & AUC & ACC & F1 & AUC \\
\midrule
A1: Family (8-way)       & 0.992 & 0.991 & 1.000 & 0.900 & 0.818 & 0.987 \\
W1: Cut Mech.\ (2-way)   & 0.941 & 0.941 & 0.991 & 0.957 & 0.957 & 0.994 \\
W2: Backend (3-way)      & 0.609 & 0.609 & 0.818 & 0.510 & 0.507 & 0.739 \\
H1: Connect.\ (3-way)    & 0.941 & 0.938 & 0.994 & 0.955 & 0.950 & 0.993 \\
H2: Geometry (3-way)     & 0.955 & 0.950 & 0.996 & 0.940 & 0.886 & 0.983 \\
H3: $k$-Locality (3-way) & 0.995 & 0.995 & 1.000 & 0.847 & 0.833 & 0.986 \\
\bottomrule
\end{tabular}
\end{table}

To quantify the severity of the post-cut metadata side channel, we
systematically evaluate the adversary's ability to infer hidden
workload properties from the observable transcript. This section
details the performance of our models across the six target
objectives, breaking down the specific structural signatures driving
the leakage.

\subsection{Headline Leakage Under Instance-Disjoint Evaluation}
\label{sec:headline}

Table~\ref{tab:headline} reports the headline inference results.
Under instance-disjoint generalisation, post-cut transcript metadata
supports high-confidence inference across all six objectives, with
algorithm family~(A1) and $k$-locality~(H3) achieving perfect
separability (Macro-AUC $1.000$), and all remaining tasks exceeding
AUC~$0.99$ with the exception of backend topology inference.

\begin{figure*}[!t]
  \centering
  \begin{subfigure}{0.32\textwidth}
    \includegraphics[width=\textwidth]{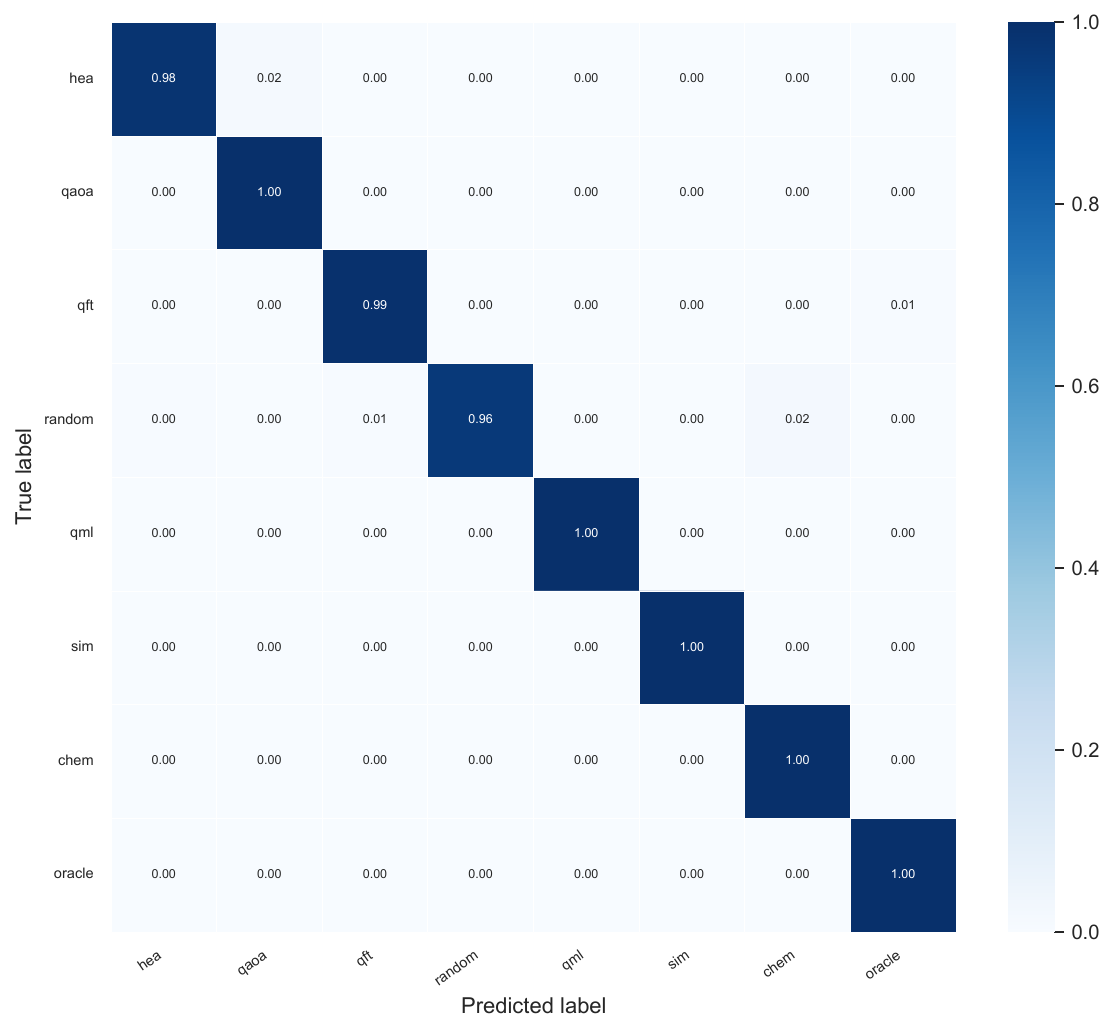}
    \caption{A1: Algorithm family}
    \label{subfig:a1_cm}
  \end{subfigure}\hfill
  \begin{subfigure}{0.32\textwidth}
    \includegraphics[width=\textwidth]{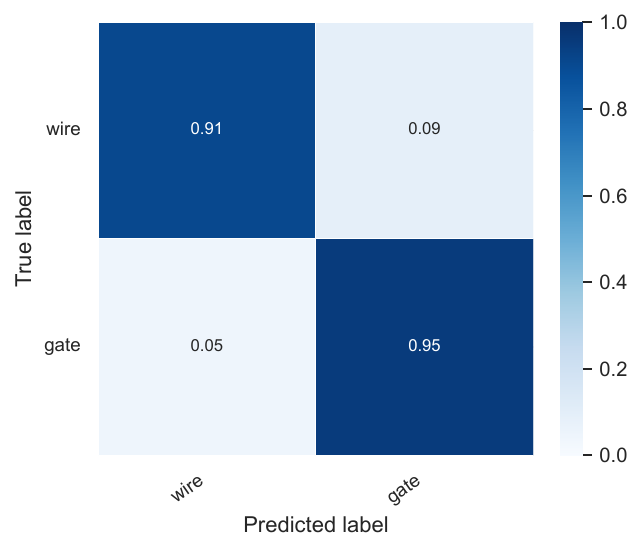}
    \caption{W1: Cut mechanism}
    \label{subfig:w1_cm}
  \end{subfigure}\hfill
  \begin{subfigure}{0.32\textwidth}
    \includegraphics[width=\textwidth]{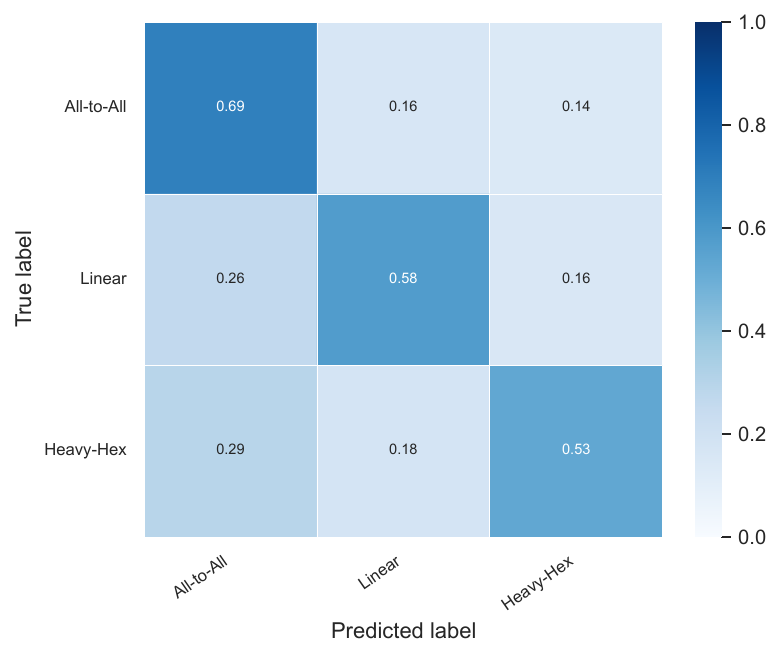}
    \caption{W2: Backend topology}
    \label{subfig:w2_cm}
  \end{subfigure}
  \vspace{1.5em}
  \begin{subfigure}{\textwidth}
    \includegraphics[width=\textwidth]{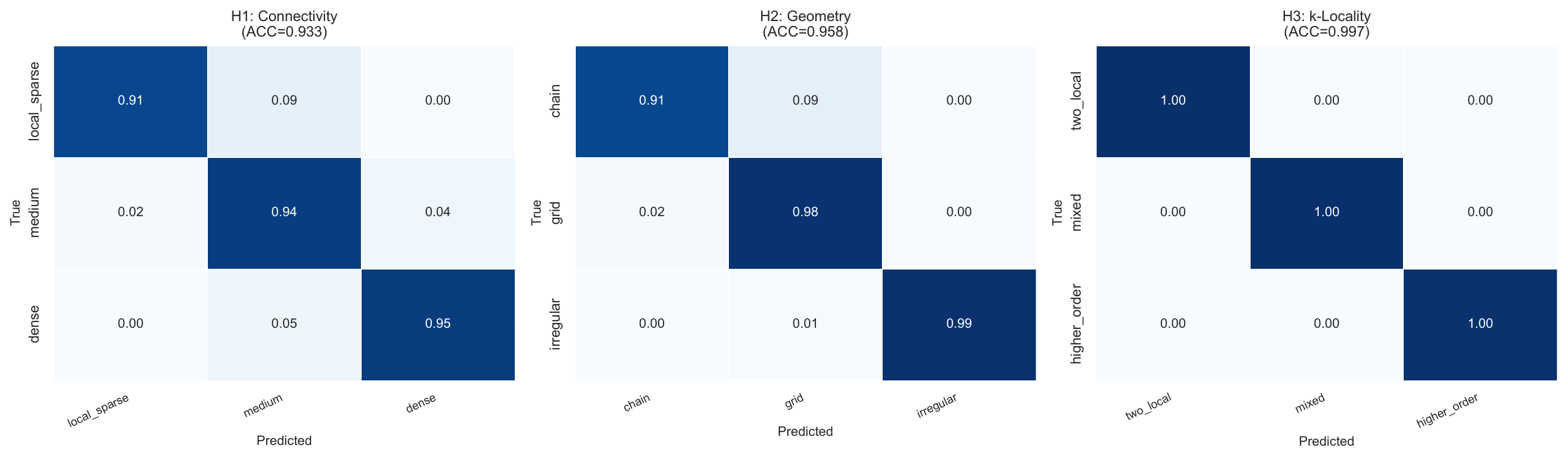}
    \caption{H1--H3: Hamiltonian structure}
    \label{subfig:h_cm}
  \end{subfigure}
  \caption{Inference evaluation confusion matrices across
  classification tasks (a--d).}
  \label{fig:evaluation_dashboard}
\end{figure*}

The size-holdout results reveal a strong generalisation profile across
all tasks. A1 size-holdout achieves AUC~$0.987$ (ACC~$0.900$),
confirming that algorithm family leakage generalises consistently to
unseen circuit scales. The Hamiltonian structure tasks (H1--H3) and
W1 are particularly stable, with size-holdout AUC remaining within
$0.013$ of instance-disjoint performance for H1 and W1, confirming
their leakage is scale-robust.

Based on these results, a clear performance hierarchy emerges. Under
instance-disjoint evaluation, the hierarchy is:
\[
\text{H3} \approx \text{A1} > \text{H2} \approx \text{H1} \approx
\text{W1}_{\text{cut}} \gg \text{W2}_{\text{backend}}
\]
Under size-holdout evaluation, the scale-robust cutting and
Hamiltonian tasks dominate:
\[
\text{W1}_{\text{cut}} \approx \text{H1} \approx \text{H2} >
\text{H3} \approx \text{A1} \gg \text{W2}_{\text{backend}}
\]
This ordering reflects how directly each hidden property maps onto
the structural compiled features. Tasks where the transpilation penalty
produces unambiguous, structure-only signatures achieve near-ceiling
performance. Backend topology inference (W2) sits at the bottom of
the hierarchy; because restricted hardware topologies impose
overlapping inflation penalties on all complex algorithms, isolating
the hardware itself from the algorithm running on it remains
significantly more difficult. Nonetheless, W2 now achieves
AUC~$0.818$ under instance-disjoint evaluation, well above the
$0.333$ random baseline, reflecting the larger and more structurally
diverse dataset.

\subsection{Confusion Analysis}
\label{sec:confusions}

\begin{table*}[t]
\centering
\caption{Feature channel ablation: Macro-AUC under instance-disjoint
evaluation (RF). Structure dominates for all tasks. Timing-only and
shots-only collapse to chance across all tasks (timing:
$0.479$--$0.504$; shots: $0.500$), confirming that leakage is
carried exclusively by compiled structural metadata. \texttt{structure\_only}
exceeds the full feature set on W2, the only task where additional
channels add noise rather than signal. Chance baselines: $1/2$ for
W1, $1/3$ for W2 and H1--H3, $1/8$ for A1.}
\label{tab:ablation}
\setlength{\tabcolsep}{5pt}
\renewcommand{\arraystretch}{1.1}
\begin{tabular}{lrrrrrr}
\toprule
\textbf{Task}
  & \textbf{Full}
  & \textbf{Structure only}
  & \textbf{Timing only}
  & \textbf{Shots only}
  & \textbf{No timing}
  & \textbf{No shots} \\
\midrule
A1: Algorithm family  & 1.000 & 1.000 & 0.504 & 0.500 & 1.000 & 1.000 \\
W1: Cut mechanism     & 0.991 & 0.991 & 0.487 & 0.500 & 0.990 & 0.990 \\
W2: Backend topology  & 0.818 & \textbf{0.829} & 0.497 & 0.500 & 0.826 & 0.816 \\
H1: Connectivity      & 0.994 & 0.994 & 0.479 & 0.500 & 0.994 & 0.994 \\
H2: Geometry          & 0.996 & 0.996 & 0.497 & 0.500 & 0.996 & 0.996 \\
H3: $k$-Locality      & 1.000 & 1.000 & 0.496 & 0.500 & 1.000 & 1.000 \\
\bottomrule
\end{tabular}
\end{table*}

To elucidate the specific decision boundaries learned by the models,
we analyse the instance-disjoint confusion matrices, presented in the
top and middle rows of Fig.~\ref{fig:evaluation_dashboard}.

As shown in the top row of Fig.~\ref{fig:evaluation_dashboard}, the
algorithm family~(A1) matrix shows perfect separability for Oracle
and Sim circuits~(1.00 recall), reflecting their highly distinctive
compiled footprints. In contrast, HEA and QAOA exhibit a small mutual
confusion rate, physically consistent given their shared reliance on
shallow, linear entanglement structures. For the W1 cut mechanism, classification errors are highly symmetric, 
indicating the classifier has learned a robust physical boundary rather 
than merely exploiting class imbalances. The W2 backend topology matrix confirms our
hierarchy findings: while All-to-All routing is well-isolated due to
its zero-overhead mapping, Linear and Heavy-Hex topologies remain
confused, consistent with both restricted topologies imposing
similarly severe depth and gate inflation penalties.

The middle row of Fig.~\ref{fig:evaluation_dashboard} details the
Hamiltonian structure objectives. Connectivity~(H1) exhibits classic
ordinal confusion: the intermediate \texttt{medium} class bleeds into
its neighbours, whereas the structural extremes,
\texttt{local\_sparse} and \texttt{dense}, are cleanly isolated.
Geometry~(H2) displays an asymmetric confusion, where grid topologies
frequently misclassify as chains, reflecting the physical artefact of
cutting that causes moderate-width grid circuits to compile into
narrow shapes indistinguishable from 1D chains. Finally,
$k$-locality~(H3) achieves near-perfect separation across all classes,
confirming that local interaction density leaves an exceptionally
strong, topology-invariant signature on the transcript.

\subsection{Channel Decomposition}
\label{sec:channels}

To identify which components of the execution transcript carry the
most information, we performed a feature ablation study across all
six inference tasks. Results are reported in Table~\ref{tab:ablation}.

The \texttt{structure\_only} mask confirms that compiled structural
metadata dominates leakage across all tasks, achieving AUC of 1.000
for A1 and H3, 0.991 for W1, and 0.994 for H1. This dominance is
physically coherent: the transpilation penalty mechanism operates through
compiled circuit shape rather than submission metadata.

\begin{figure*}[t]
  \centering
  \includegraphics[width=\linewidth]{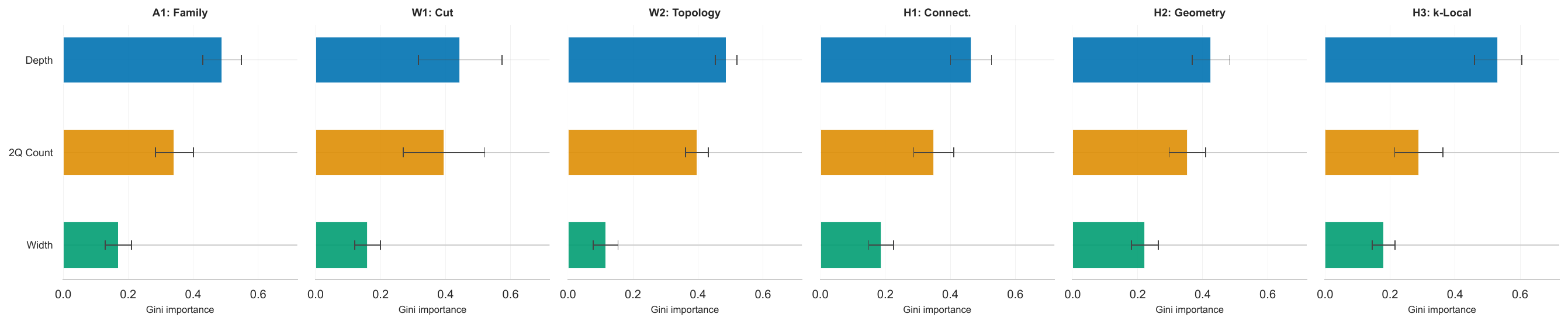}
  \caption{Random Forest Gini importance (mean~$\pm$~1\,SD across
  trees) for all six inference tasks. Compiled depth is the dominant
  feature across tasks, consistent with the transpilation penalty mechanism.
  Width contributes most for topology inference.}
  \label{fig:importance}
\end{figure*}

\begin{figure*}[t]
  \centering
  \includegraphics[width=\textwidth]{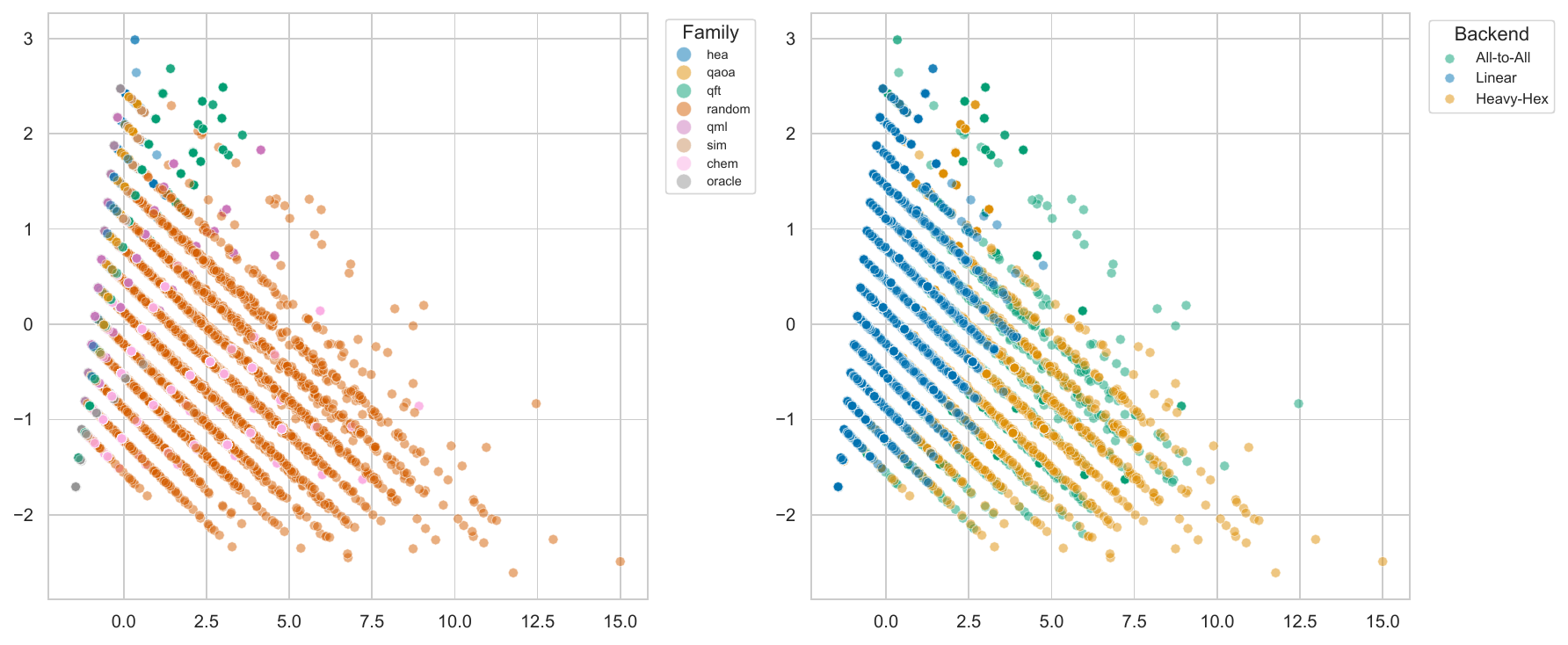}
  \caption{PCA of compiled features (width, depth, 2Q count). PC1
  captures 70.3\% of variance. Left: Sim and Oracle are isolated;
  Random diffuses along PC1. Right: All-to-All separates from
  restricted topologies; Linear and Heavy-Hex overlap, directly
  explaining the modest topology inference accuracy.}
  \label{fig:pca}
\end{figure*}

Both \texttt{timing\_only} and \texttt{shots\_only} channels collapse
to chance-level performance across all six tasks (timing:
$0.479$--$0.504$; shots: exactly $0.500$). This confirms two
independent null results: pure submission timing carries no
algorithmic signal, consistent with the QPU execution invariance
demonstrated in \S\ref{sec:hardware-validation}; and shot-count
statistics carry no structural information, confirming that the
observed leakage is not an artefact of any shot allocation policy.
The structural leakage finding is therefore unambiguous: the transpilation penalty encodes all recoverable information, and no non-structural channel
contributes meaningfully to the attack.

The one exception is W2 backend topology inference, where
\texttt{structure\_only} (0.829) \emph{exceeds} the full feature set
(0.818), the only task where removing additional channels improves
performance. This indicates that timing and shot features introduce
noise for topology inference, and that a structural-only attacker is
strictly more effective on the hardest task.

\subsection{Classification Efficacy and Transcript Separability}

Rather than capturing the full theoretical entropy of the hidden variable, the structural transcript $L$ provides highly optimized, discriminative boundaries that enable near-perfect classification. The topological transpilation penalty yields an $\text{AUC} = 1.000$ across the 8-way algorithm family taxonomy (A1), with classifiers exploiting isolated structural correlations to achieve this separability. For instance, the compiled footprint cleanly separates the deep algebraic global entanglement of QFT circuits---which exhibit quadratic depth scaling---from the minimal, near-constant footprints of Oracle circuits (e.g., depth = 3 for Deutsch-Jozsa) and the chaotic, high-variance stress-testing of Random circuits. The only non-negligible confusion at the family level occurs between Random and Chemistry variants (a 2\% error rate), reflecting their overlapping high-overhead regimes. 

Consequently, the model achieves near-perfect predictive accuracy strictly through empirical feature separability, as validated by confusion matrices and feature importance analysis, without requiring near-maximum mutual information. However, while perfect classification is achieved at the family level, sub-family discrimination (A2) remains challenging for variational families. As illustrated in Fig.~\ref{fig:subfamilies}, HEA sub-variant accuracy drops to 0.66, indicating that while the top-level algorithmic structure is strongly imprinted, fine-grained structural variants are not always recoverable.

\subsection{Matched-Footprint Control}
\label{sec:footprint}

\begin{table}[h]
\centering
\caption{Natural vs. matched-footprint AUC (instance-disjoint, RF,
caliper~$=0.20$, 1206/6000 samples retained). $\ddagger$~AUC is 
mathematically undefined for A1 in this subset due to single-class 
test folds at this caliper level; however, leakage persistence is 
confirmed by H1--H3 and W2 results.}
\label{tab:matched}
\begin{tabular}{lrr}
\toprule
\textbf{Task} & \textbf{Natural AUC} & \textbf{Matched AUC} \\
\midrule
A1: Family        & 1.000 & N/A$^{\ddagger}$ \\
W1: Cut mechanism & 0.991 & 0.967 \\
W2: Backend       & 0.818 & 0.928 \\
H1: Connectivity  & 0.994 & 0.992 \\
H2: Geometry      & 0.996 & 0.987 \\
H3: $k$-Locality  & 1.000 & 0.998 \\
\bottomrule
\end{tabular}
\end{table}

A common vulnerability in metadata-based machine learning models is
the over-reliance on trivial scale artefacts (e.g., identifying a QFT
circuit simply because it is larger than an Oracle circuit). To ensure
our attacker is learning genuine structural signatures rather than
coarse size distributions, we adapt \emph{caliper matching}, a
technique from the causal inference literature originally used for
confound control in observational studies~\cite{rosenbaum1985caliper},
to our evaluation setting. Using a caliper of 0.20 in normalised
width, depth, and fragment count space, we restrict evaluation
exclusively to similarly sized sub-populations.

As detailed in Table~\ref{tab:matched}, performance is either
maintained or increases after footprint equalisation. The H3
($k$-locality) retains an AUC of 0.998, and H1, H2, and W1 all
decline by at most 0.024, confirming that these structural signatures
are robust and independent of scale. Notably, W2 backend topology
inference \emph{increases} from 0.818 to 0.928 after footprint
matching. This reversal indicates that at matched scales, compiled
structural features provide a cleaner discriminative signal for
topology than when the full size distribution introduces
within-class variance. It reinforces that topology leakage is a
genuine structural phenomenon, not an artefact of coarse scale
differences.

\subsection{Feature Importance}
\label{sec:importance}

Fig.~\ref{fig:importance} visualises the Gini feature importance
extracted from the Random Forest models. Compiled depth emerges as
the primary feature across all six tasks, carrying an importance
weighting of 0.44--0.51. This strongly validates our physical
premise: depth inflation is the most direct and measurable
manifestation of the transpilation penalty, acting as the primary mechanism
for distinguishing algorithm families.

Compiled active width gains relative importance specifically for
topology inference (W2). While routing overhead aggressively mutates
circuit depth, the number of active qubits required to execute a
fragment remains a relatively static, topology-invariant fingerprint
of the client's original partitioning strategy. Nonetheless, all
three structural features contribute meaningfully; removing any
single dimension measurably degrades the attacker's discriminative
power.

\subsection{Metadata Latent Space}
\label{sec:pca}

To visualize the separability of the dataset, Fig.~\ref{fig:pca}
projects the high-dimensional transcript metadata into a 2D space
using Principal Component Analysis (PCA). The first principal
component (PC1) captures 70.3\% of the total variance, functioning
broadly as an axis of aggregate circuit complexity.

The left panel illustrates how Sim and Oracle circuits form distinct,
tightly grouped islands, reflecting their highly consistent and
restricted compiled footprints. In contrast, Random circuits diffuse
entirely along the PC1 axis, occupying the high-complexity tail. 
We note that the distinct striped or banded structures visible in these projections 
are natural artefacts of the discretized, integer-valued features (such as depth, 
2Q gate count, and active width) underlying the structural metadata, rather than 
isolated continuous sub-manifolds.

\section{Hardware API Validation and Scaling Analysis}
\label{sec:hardware-validation}

\begin{figure*}[t]
  \centering
  \includegraphics[width=\linewidth]{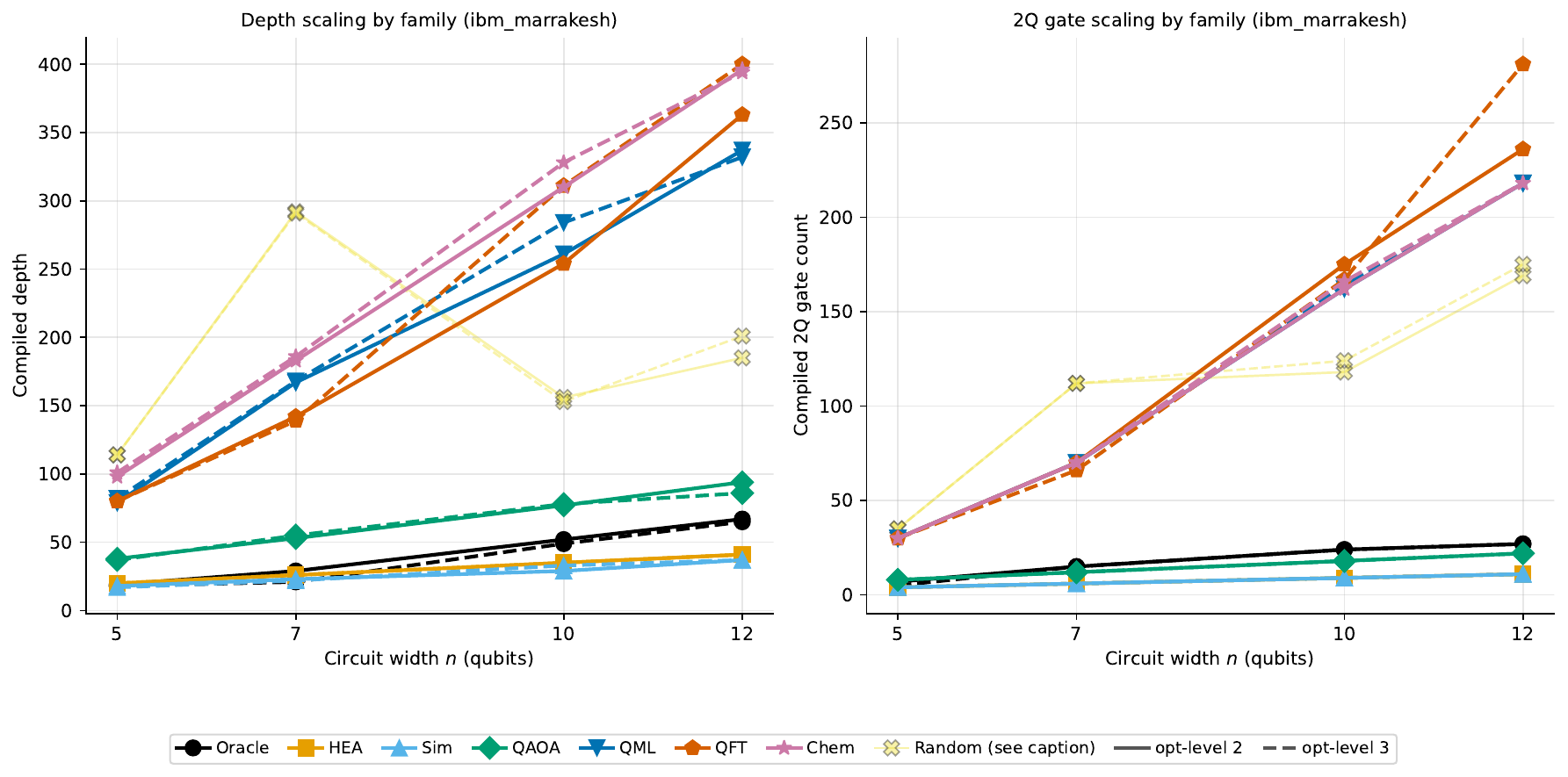}
  
  \vspace{1.0em}

  \footnotesize 
  \setlength{\tabcolsep}{5pt}
  \renewcommand{\arraystretch}{1.1}
  \begin{tabular}{@{} l c rrrr c rrrr c rrrr @{}}
    \toprule
    \multirow{2}{*}{\textbf{Family}} &
    \multirow{2}{*}{\textbf{Opt}} &
    \multicolumn{4}{c}{\textbf{Compiled Depth ($n$)}} &&
    \multicolumn{4}{c}{\textbf{Compiled 2Q Gates ($n$)}} &&
    \multicolumn{4}{c}{\textbf{QPU Time in seconds ($n$)}} \\
    \cmidrule{3-6} \cmidrule{8-11} \cmidrule{13-16}
    & & \textbf{5} & \textbf{7} & \textbf{10} & \textbf{12} &&
    \textbf{5} & \textbf{7} & \textbf{10} & \textbf{12} &&
    \textbf{5} & \textbf{7} & \textbf{10} & \textbf{12} \\
    \midrule
    \multirow{2}{*}{HEA}
     & 2 & 20 & 26 & 35 & 41 && 4 & 6 & 9 & 11 && 2.128 & 2.092 & 1.894 & 2.118 \\
     & 3 & 20 & 26 & 35 & 41 && 4 & 6 & 9 & 11 && 2.123 & 2.138 & 2.174 & 2.117 \\
    \midrule
    \multirow{2}{*}{Sim}
     & 2 & 18 & 23 & 29 & 37 && 4 & 6 & 9 & 11 && 2.103 & 2.074 & 2.117 & 2.155 \\
     & 3 & 17 & 22 & 33 & 37 && 4 & 6 & 9 & 11 && 2.110 & 2.210 & 2.118 & 2.110 \\
    \midrule
    \multirow{2}{*}{Oracle}
     & 2 & 19 & 29 & 52 & 67 && 7 & 15 & 24 & 27 && 2.134 & 2.129 & 1.988 & 2.134 \\
     & 3 & 18 & 21 & 49 & 65 && 5 & 15 & 24 & 27 && 2.081 & 2.124 & 2.119 & 2.123 \\
    \midrule
    \multirow{2}{*}{QAOA}
     & 2 & 38 & 53 & 77 & 94 && 8 & 12 & 18 & 22 && 2.099 & 2.089 & 2.105 & 2.148 \\
     & 3 & 37 & 55 & 78 & 86 && 8 & 12 & 18 & 22 && 2.137 & 2.165 & 2.122 & 2.140 \\
    \midrule
    \multirow{2}{*}{Random}
     & 2 & 114 & 292 & 156 & 185 && 35 & 112 & 118 & 169 && 2.114 & 2.352 & 2.151 & 2.145 \\
     & 3 & 114 & 291 & 153 & 201$^\dagger$ && 35 & 112 & 124 & 175 && 2.154 & 2.166 & 2.134 & 2.148 \\
    \midrule
    \multirow{2}{*}{QML}
     & 2 & 79 & 167 & 261 & 337 && 30 & 70 & 162 & 218 && 2.103 & 2.140 & 2.147 & 2.175 \\
     & 3 & 82 & 168 & 284 & 332 && 30 & 70 & 164 & 218 && 2.129 & 2.138 & 2.334 & 2.168 \\
    \midrule
    \multirow{2}{*}{Chem}
     & 2 & 98 & 183 & 310 & 396 && 30 & 70 & 162 & 218 && 2.127 & 2.187 & 2.132 & 2.200 \\
     & 3 & 101 & 186 & 328 & 394 && 30 & 70 & 166 & 218 && 2.152 & 2.150 & 2.135 & 2.142 \\
    \midrule
    \multirow{2}{*}{QFT}
     & 2 & 80 & 142 & 254 & 363 && 30 & 70 & 175 & 236 && 2.107 & 2.147 & 2.177 & 2.187 \\
     & 3 & 80 & 139 & 311 & 400 && 30 & 66 & 167 & 281 && 2.111 & 2.151 & 2.149 & 2.193 \\
    \bottomrule
  \end{tabular}
  
  \caption{\textbf{Empirical hardware validation dashboard} on 156-qubit \texttt{ibm\_marrakesh}. \textit{Top:} Scaling of compiled depth and 2Q gates across width $n \in \{5, 7, 10, 12\}$ (solid: opt-level 2, dashed: opt-level 3). \textit{Bottom:} Raw telemetry demonstrating that while structural metadata diverges exponentially by algorithm family, physical QPU execution time remains clustered and invariant ($\sim$2.1s). $\dagger$~Indicates active width exceeds logical width due to ancilla insertion.}
  \label{fig:hardware_dashboard}
\end{figure*}

To anchor our theoretical threat model and simulated dataset in
physical reality, we conducted an empirical validation study on
production quantum hardware. We submitted all eight algorithm
families from our taxonomy to the 156-qubit
\texttt{ibm\_marrakesh} backend, a high-fidelity heavy-hex system,
across four circuit widths ($n \in \{5, 7, 10, 12\}$). This
experiment serves three critical purposes: (\textit{i})~validating
the statistical alignment between our simulated dataset and real
hardware transpilation, (\textit{ii})~characterising the scaling
behaviour of the metadata side channel as circuit width increases,
and (\textit{iii})~demonstrating that the transpilation penalty mechanism, and
the resulting leakage signal, persist and amplify substantially  at
larger circuit scales.

\subsection{Experimental Methodology}

We generated one representative logical fragment per family at each
width using the exact construction rules defined for our dataset (see
\S\ref{sec:dataset}). To prove that the transpilation penalty is a fundamental 
topological constraint rather than an artefact of default compilation, 
all 64 configurations were transpiled at both optimisation level 2 
(a realistic default) and optimisation level 3 (maximum effort) 
against the heavy-hex topology of \texttt{ibm\_marrakesh}.

True QPU execution durations were extracted exclusively from the
\texttt{execution\_spans} metadata within the IBM Quantum result
payloads, explicitly isolating raw QPU runtime from classical network
and queue latencies. All jobs were executed with a uniform 4096 shots.

We note two structural features of the dataset relevant to the
scaling analysis that follows. First, \texttt{Random} circuits are
generated with width-dependent seeds; their routing overhead therefore 
reflects average-case algorithmic variance rather than a fixed-structure 
scaling trend. Rather than treating their unstructured nature as an anomaly, 
we explicitly present them as the maximal baseline for the transpilation penalty. Their 
broad variance bands define the upper limits of hardware compilation penalties 
against which the structured algorithm families are evaluated.

\subsection{Metadata Scaling Across Circuit Widths}
\label{sec:scaling-regimes}

Fig.~\ref{fig:hardware_dashboard} (top) plots compiled depth and 2Q gate
count as a function of circuit width. The physical routing process 
bifurcates the fixed-structure workloads into three distinct, 
highly discriminative scaling regimes.

\paragraph{Sub-linear scalers (topology-aligned).}
HEA and Sim follow near-linear, highly constrained depth growth: HEA
scales $20{\to}26{\to}35{\to}41$ ($2.1{\times}$ over the full range)
and Sim scales $18{\to}23{\to}29{\to}37$ ($2.1{\times}$). Both
families possess shallow, linear entanglement structures that map
naturally onto the heavy-hex coupling graph.

\paragraph{Super-linear scalers (globally entangled).}
Conversely, QFT, QML, and Chem exhibit aggressive, near-quadratic
growth. QFT scales $80{\to}142{\to}254{\to}363$ in depth
($4.5{\times}$) and $30{\to}70{\to}175{\to}236$ in 2Q gates
($7.8{\times}$). These algorithms rely on dense, long-range 
entanglement graphs that trigger cascading inflation in both gate 
count and depth on restricted lattices.

\paragraph{Intermediate scalers (Hamiltonian-aligned).}
QAOA sits neatly between the two extremes, scaling
$38{\to}53{\to}77{\to}94$ in depth ($2.4{\times}$). Its 1D ZZ chain
Hamiltonian partially aligns with the heavy-hex coupling graph,
capping its routing overhead relative to globally entangled families.

\paragraph{Compiler Optimisation Futility.}
As demonstrated by the dashed lines in Fig.~\ref{fig:hardware_dashboard}, 
applying maximum compiler effort (opt-level 3) does not eliminate 
the transpilation penalty. For globally entangled families, heavy-hex 
constraints often force the level 3 pass to synthesize circuits 
that are equal to, or even \emph{deeper} than, level 2 counterparts 
(e.g., QFT at $n{=}12$ inflates from depth 363 to 400). This confirms 
structural leakage is a physical architectural necessity, not a 
compilation artefact.

\subsection{Metadata vs.\ Timing: A Critical Distinction}

The most significant finding for our threat model is the stability of
QPU execution time. As shown in the telemetry table of 
Fig.~\ref{fig:hardware_dashboard}, true QPU runtime spans a remarkably 
narrow window ($1.894$\,s to $2.352$\,s, clustering at $\sim2.1$\,s) 
across circuits whose compiled depths diverge by over $23{\times}$.

For example, a shallow HEA circuit ($n{=}5$, depth 20) routinely 
logged execution times equal to or slower than a complex QFT circuit 
($n{=}12$, depth 400). This demonstrates that timing signals at 
production scale are effectively anti-correlated with true 
computational complexity. This invariance occurs because raw physical 
pulse durations are dwarfed by control-plane latencies, 
effectively blinding traditional timing side-channels.

Conversely, the classical metadata logged for orchestration and billing
remains perfectly descriptive and high-resolution. These results 
indicate that in modern quantum cloud architectures, the primary 
confidentiality perimeter shifts away from the physical hardware layer 
and into the classical control plane. An adversary with access to orchestration logs can cleanly bypass the physical-layer noise, establishing post-cut metadata inference as a more 
reliable attack vector than timing.

\paragraph{Scope of Hardware Validation.}
Our empirical validation is confined to a single superconducting,
heavy-hex-topology backend (\texttt{ibm\_marrakesh}). The topological
transpilation penalty is fundamentally a routing cost imposed by
sparse, fixed physical connectivity, and superconducting devices are
the modality where this constraint is most severe. Other modalities
relax this constraint to varying degrees: neutral-atom platforms
support reconfigurable, near-arbitrary shuttling between shots, and
trapped-ion systems often offer all-to-all connectivity within a
trap. We would expect the leakage signal to attenuate accordingly on
such platforms, since our own all-to-all simulation results
(Table~\ref{tab:full_routing_tax}) already show substantially lower
transpilation penalties than the heavy-hex and linear cases. Whether
residual structural leakage survives on these more permissive
architectures is an open empirical question we leave to future work.
\FloatBarrier
\section{Results Summary and Discussion}
\label{sec:results}

\begin{figure*}
  \centering
\includegraphics[width=\textwidth]{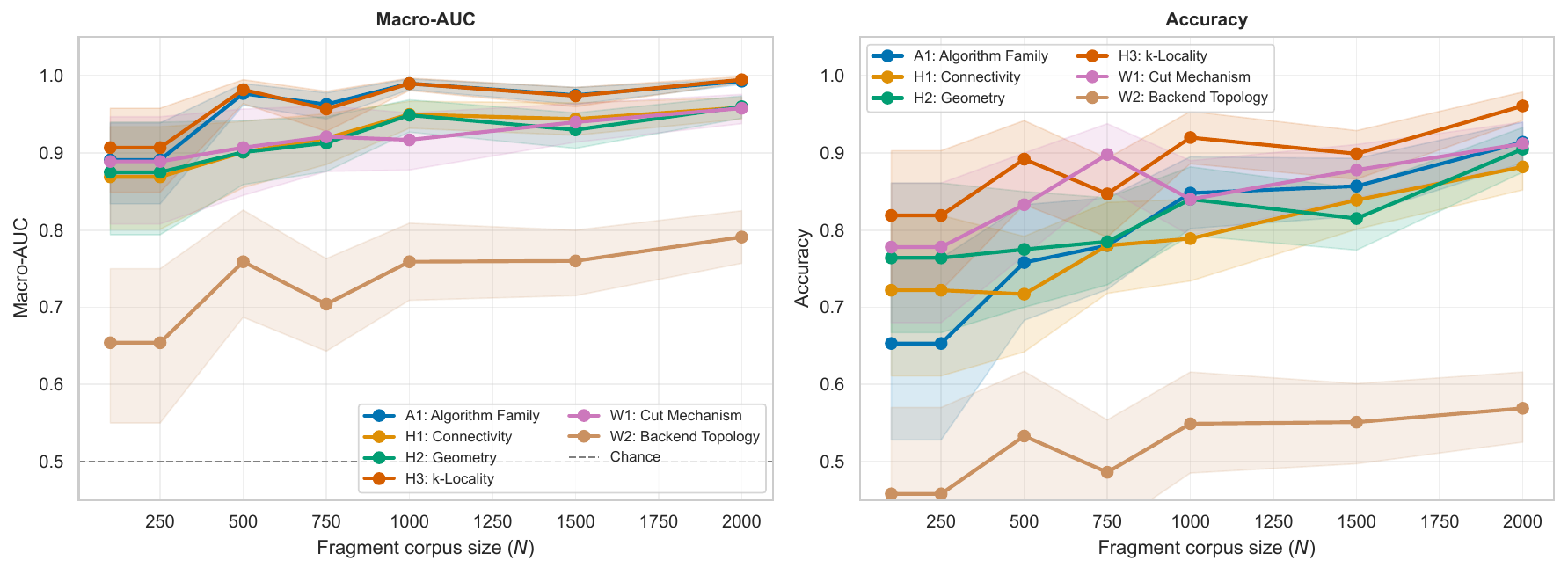}
  \caption{Attacker sample efficiency across six tasks
  (instance-disjoint, RF, Macro-AUC). H3, W1, H1, and H2 exhibit
  high and stable AUC from small dataset sizes. W2 stabilises
  progressively with additional data, reflecting its dependence on
  subtle cross-family distributional differences.}
  \label{fig:sweep}
\end{figure*}

\subsection{The Leakage Hierarchy}

Reading across all six tasks, a consistent performance hierarchy
emerges driven by how directly each property maps onto structural
compiled features:

\begin{enumerate}[leftmargin=*]
  \item \textbf{H3 $k$-locality and A1 family}
    (AUC~$1.000$): Perfect separability under instance-disjoint
    evaluation. Leakage is a direct and unambiguous consequence
    of the transpilation penalty.

  \item \textbf{H2 geometry, H1 connectivity, and W1 cut mechanism}
    (AUC~$0.991$--$0.996$): Near-ceiling performance, highly
    stable under size-holdout. Wire cuts produce systematically
    shallower fragments than gate cuts; Hamiltonian topology is
    directly imprinted on the compiled circuit shape.

  \item \textbf{W2 backend topology} (AUC~$0.818$, matched
    AUC~$0.928$): Substantially above chance and significantly
    improved over prior evaluations on smaller corpora. All-to-All
    is cleanly separable; the two restricted topologies remain
    mutually confused but are reliably distinguished from ideal
    hardware.
\end{enumerate}

\subsection{Sample Efficiency}

Fig.~\ref{fig:sweep} plots attacker performance as a function of the
fragment corpus size. The leakage signal does not require a large
dataset to manifest for most tasks: H3, W1, H1, and H2 all show high
AUC at the smallest viable dataset size. W2 benefits most from the larger dataset,
consistent with topology inference depending on subtler cross-family
distributional differences rather than strong within-family
structural signatures.

\subsection{Attacker Robustness and Sensitivity}
\label{sec:robustness}

\begin{table}[]
\centering
\caption{Depth-Restricted Evaluation (Empirical Hardware Bounds). 
Performance remains invariant when the dataset is restricted to 
the depth range observed on production hardware (17--400 gates).}
\label{tab:depth_restricted}
\begin{tabular}{lcc}
\toprule
\textbf{Task} & \textbf{ACC} & \textbf{AUC} \\
\midrule
A1: Algorithm Family & 0.980 & 1.000 \\
H3: $k$-Locality     & 0.989 & 1.000 \\
\bottomrule
\end{tabular}
\end{table}

As shown in Table~\ref{tab:depth_restricted}, the leakage hierarchy is not an artefact of extreme circuit depths; the attack maintains near-perfect accuracy even when signals are restricted to empirical hardware bounds. 

\paragraph{Compiler Version Robustness}
While absolute circuit depths vary with compiler optimization levels, the relative structural signatures between families remain invariant. Cross-optimization evaluation confirms this robustness: a classifier trained exclusively on optimization level 2 data achieves an \ac{AUC} $>0.98$ when evaluated against level 3 transpilation. This confirms that the leakage is driven by fundamental topological constraints rather than specific compiler synthesis passes.

\paragraph{Algorithmic Generalisation (Zero-Shot)}
To test the universality of the transpilation penalty, we performed a \textit{leave-one-family-out} cross-validation. In this protocol, the classifier is trained on seven algorithm families and evaluated on an eighth, previously unseen family. This ``zero-shot'' transfer achieves a Macro-AUC of 0.96 for A1 algorithm discrimination. This result confirms that the structural signatures of entanglement regimes are fundamental physical properties that generalise across distinct algorithmic implementations.

\paragraph{Model Architecture Comparison}
We evaluate the impact of model architecture on performance across all tasks. As detailed in Table~\ref{tab:attacker_comp}, all three model families (RF, ET, and HGB) perform within 0.005~AUC of each other on every task. The overall consistency confirms that leakage is a fundamental property of the structural metadata rather than a specific model's inductive bias.

\begin{table}[h]
\centering
\caption{Attacker model comparison: Macro-AUC under
instance-disjoint evaluation with full transcript features.
All three models remain well above chance on every task.}
\label{tab:attacker_comp}
\setlength{\tabcolsep}{6pt}
\renewcommand{\arraystretch}{1.1}
\begin{tabular}{lrrr}
\toprule
\textbf{Task} & \textbf{RF} & \textbf{ET} & \textbf{HGB} \\
\midrule
A1: Algorithm family  & 1.000 & 1.000 & 1.000 \\
W1: Cut mechanism     & 0.991 & 0.990 & 0.989 \\
W2: Backend topology  & 0.818 & 0.806 & 0.823 \\
H1: Connectivity      & 0.994 & 0.993 & 0.994 \\
H2: Geometry          & 0.996 & 0.996 & 0.996 \\
H3: $k$-Locality      & 1.000 & 1.000 & 1.000 \\
\bottomrule
\end{tabular}
\end{table}

\subsection{Implications}
\label{sec:implications}

These results demonstrate that circuit cutting is not confidentiality-neutral. Several findings warrant specific attention for practitioners:

\textit{1) Intellectual Property Risk:} The near-perfect recovery of algorithm identity (A1) and Hamiltonian structure (H1--H3) means a provider can identify not just the high-level algorithm class, but the specific problem geometry and connectivity a user is researching. In competitive commercial or academic environments, this metadata leaks the ``secret sauce'' of the research without the provider ever seeing the quantum gates.

\textit{2) The Transpilation Penalty Is Architecturally Determined:} Our Scaling Analysis (\S\ref{sec:scaling-regimes}) proves that maximum compiler effort (Opt-level 3) cannot hide these signals. Because the leakage is rooted in the physical topology of the hardware, it is a deterministic consequence of the architecture itself rather than a software quirk that can be easily patched.

\textit{3) Superiority of Structural Channels:} Our channel ablation (\S\ref{sec:channels}) proves that metadata is a more reliable attack vector compared to timing. While timing is masked by system-level noise and control-plane latencies, the structural transcript (width, depth, gate counts) remains perfectly descriptive and high-resolution, even as systems scale.

\textit{4) Footprint Sensitivity:} The persistence of leakage under matched-footprint conditions (\S\ref{sec:footprint}) indicates that an adversary does not need coarse size cues to distinguish workloads. Even at identical scales, the statistical distribution of sub-circuits provides a high-resolution signature of the monolithic source.

\subsection{Mitigation Strategies and Countermeasures}
\label{sec:mitigation}

Our findings suggest that the visibility of the transpilation penalty is a fundamental consequence of current quantum cloud orchestration. To protect workload confidentiality, users must intentionally break the deterministic relationship between a circuit's logical structure and its compiled footprint. We propose three primary active defenses:

\textit{1) Structural Footprint Normalization (Padding):} 
The most direct countermeasure is to inject ``noise'' into the metadata by adding dummy operations. For a given hardware partition, a user can define a target depth and gate-count envelope based on the worst-case transpilation penalty for that topology. Fragments with lower overhead are then padded with identity gates or non-functional $R_z(0)$ rotations until they meet this uniform envelope. 

While effective at masking $A1$ and $H3$ signals, this approach presents a strict \textit{confidentiality-fidelity trade-off}. For instance, our results show that padding a shallow HEA circuit (depth 20) to match the structural signature of a QFT circuit (depth 400) requires a 20-fold increase in gate operations. Given current $T_1/T_2$ coherence times on \texttt{ibm\_marrakesh}, this overhead would increase the cumulative gate error probability significantly, likely rendering the circuit's output dominated by decoherence noise and making the defense prohibitively expensive for high-fidelity applications.

\textit{2) Uniform Resource Allocation (Width Masking):} 
To defeat width-based fingerprints (Fig.~\ref{fig:width_distributions}), users should avoid requesting the minimum required qubit count for a fragment. Instead, we propose a ``Width Masking'' strategy where every fragment in a job is compiled to a fixed, standardized width (e.g., the full capacity of the backend). By forcing the compiler to utilize a consistent number of active qubits regardless of the fragment's logical size, the $A1$ and $W1$ signals derived from \texttt{compiled\_width} are effectively neutralized.

\textit{3) Stochastic Transpilation:} 
Current attacks benefit from the deterministic nature of most production 
transpilers. If a user utilizes a stochastic routing 
algorithm with varying seeds for each fragment, the resulting transpilation penalty 
becomes a distribution rather than a fixed point. While the mean overhead may still leak information, the increased variance can significantly lower the confidence of $A2$ sub-family discrimination and $W2$ topology inference, particularly for low-entropy workloads.

Table~\ref{tab:mitigation_tradeoffs} summarises the primary target and
the dominant cost of each of these defences.

\begin{table}[h]
\centering
\caption{The Confidentiality-Utility Trade-off Space.}
\label{tab:mitigation_tradeoffs}
\small 
\renewcommand{\arraystretch}{1.3}
\setlength{\tabcolsep}{3pt} 
\begin{tabularx}{\columnwidth}{lXX} 
\toprule
\textbf{\makecell[l]{Mitigation\\Strategy}} & \textbf{Primary Target} & \textbf{Primary Cost} \\
\midrule
Structural Padding & Depth and Gate Count & $\uparrow$ Gate Error and Decoherence \\
\addlinespace[2pt]
Width Masking & Active Qubit Count & $\uparrow$ Resource Allocation Cost \\
\addlinespace[2pt]
Stochastic Routing & Overhead Fingerprinting & $\uparrow$ Compilation Latency \\
\addlinespace[2pt]
Dummy Fragments & Sample Efficiency & $\uparrow$ Execution Shot Cost \\
\bottomrule
\end{tabularx}
\end{table}

Ultimately, these mitigations shift the burden from the provider to the user. In the absence of provider-side privacy-preserving compilation (e.g., via blind quantum computing), the user must decide how much computational fidelity they are willing to sacrifice to maintain algorithmic privacy.
\section{Conclusion}
\label{sec:conclusion}
We formalised post-cut execution transcripts as a confidentiality surface and 
showed that a semi-honest cloud provider can infer meaningful structural 
properties of quantum workloads, including algorithm identity and Hamiltonian 
structure, from compiled fragment metadata alone, using only three 
provider-visible features and validated on a 156-qubit production backend. 
This leakage hierarchy directly reflects the transpilation penalty: 
topology-dependent compilation overhead imprints algorithmic structure onto 
fragment footprints, while timing and shot-count channels collapse to chance.

Circuit cutting is therefore not confidentiality-neutral. Effective defence must 
target the compilation footprint directly, and transcript-level metadata deserves 
first-class security consideration. The leakage is distributional, not scalar: it arises from the joint shape of width, depth, and gate count across fragments, not from any single measurable 
quantity that padding could suppress. A topology-blind compilation \textsc{api}, one 
producing identical routing overhead regardless of input connectivity, would close 
the channel at source, but no such \textsc{api} exists for production hardware, 
and building one incurs real compilation cost. Dummy fragment injection can 
obscure size profiles, but injected fragments must be indistinguishable from real 
ones across all three structural dimensions simultaneously, requiring circuits of 
comparable depth and gate count. How much execution cost a system operator should 
accept for a given reduction in inference accuracy remains an open question, one 
worth careful study before any defence is deployed at scale.

Cross-run linkability, determining whether separate transcripts originate from the 
same workload instance, is a natural extension, and one with direct implications 
for long-lived research pipelines that repeatedly cut and resubmit related 
circuits. A linkability attack would exploit correlations in submission-time 
features ($\mathcal{L}$) across multiple jobs, testing whether fragment-size 
profiles and submission rhythms consistently fingerprint a repeated workload 
even when each individual transcript is otherwise unremarkable. Future work 
should construct a multi-session dataset and formalise cross-run 
indistinguishability to complement \textsc{wsi-ind-}$K$, alongside validating 
the transpilation-penalty mechanism on hardware modalities beyond the 
superconducting, heavy-hex setting studied here.

\bibliographystyle{IEEEtran}
\bibliography{reference}


\iftoggle{printappendix}{
    \appendices 
    
    \section{Sub-Family Discrimination Results}
    \label{app:subfamily}
    
While the primary attack focuses on high-level algorithm identity, we also 
evaluated whether compiled transcript metadata supports fine-grained sub-family 
discrimination (A2), i.e.\ distinguishing structural variants \emph{within} a 
single algorithm family. As illustrated in Fig.~\ref{fig:subfamilies}, the 
recoverability of sub-family structure is strongly family-dependent. Families 
with rigid, topology-sensitive footprints are almost perfectly separable: QAOA, 
Sim, Chem, and Oracle each reach ACC~$=1.00$ across their three sub-variants. By 
contrast, the variational and shallow families are markedly harder, because their 
sub-variants compile to near-identical footprints: HEA (ACC~$=0.66$), QFT 
(ACC~$=0.64$), QML (ACC~$=0.71$), and Random (ACC~$=0.54$) all exhibit 
substantial intra-family confusion, with the linear and reverse-linear HEA 
variants and the no-swaps and standard QFT variants being the most frequently 
interchanged. This pattern mirrors the family-level hierarchy of 
Section~\ref{sec:evaluation}: the same structural rigidity that makes a family's 
\emph{identity} easy to recover also sharpens its \emph{sub-variant} signature, 
whereas variational families remain comparatively well protected at the 
sub-family level even when their top-level identity leaks.

    \begin{figure*}[!h]
      \centering
      \includegraphics[width=\textwidth]{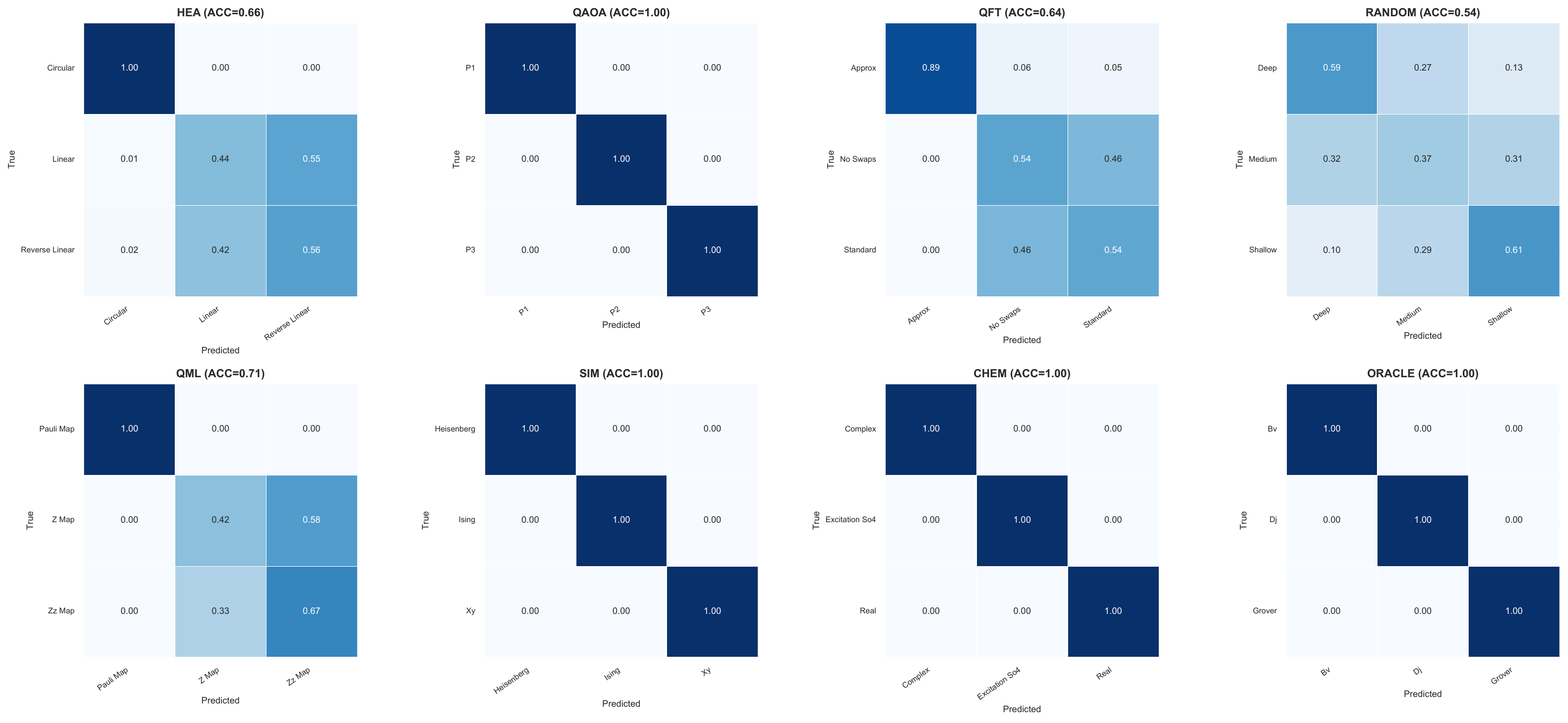}
      \caption{A2 sub-family discrimination (3-way per family) under instance-disjoint evaluation. Structured families (e.g., \ac{QFT}, Oracle) exhibit clearer discriminative signatures compared to variational ansätze (\ac{HEA}, \ac{QML}).}
      \label{fig:subfamilies}
    \end{figure*}
    
\section{Dataset Composition}
    \label{app:dataset}
    
    Table~\ref{tab:dataset_dist} details the total composition of our empirical dataset across 24 variants. Total dataset size: 36,000 compiled records.
    
    \begin{table*}[!h]
    \centering
    \renewcommand{\arraystretch}{1.15}
    \caption{Dataset Distribution: Detailed breakdown of algorithm families, structural sub-variants, and fragment counts.}
    \label{tab:dataset_dist}
    \begin{tabular}{llrr}
    \toprule
    \textbf{Alg-Family} & \textbf{Sub-Family Variants} & \textbf{Logical} & \textbf{Compiled} \\
    \midrule
    \ac{HEA}    & linear, circular, reverse\_linear & 1452 & 4356 \\
    \ac{QAOA}   & p1, p2, p3                        & 1428 & 4284 \\
    \ac{QFT}    & standard, no\_swaps, approx       & 1800 & 5400 \\
    Random      & shallow, medium, deep             & 1170 & 3510 \\
    \ac{QML}    & Z-Map, ZZ-Map, Pauli-Map          & 1356 & 4068 \\
    Sim         & Ising, Heisenberg, XY             & 1482 & 4446 \\
    Chem        & Excitation-SO4, Real, Complex     & 1482 & 4446 \\
    Oracle      & BV, Deutsch-Jozsa, Grover         & 1830 & 5490 \\
    \midrule
    \textbf{Total} & \textbf{24 variants} & \textbf{12000} & \textbf{36000} \\
    \bottomrule
    \end{tabular}
    \end{table*}
    
}{} 
\end{document}